\newtheorem{definition}{Definition}
\newtheorem{theorem}{Theorem}
\title{\LARGE \bf
{Almost-global tracking for a rigid body with internal rotors
}}
\author{A. Nayak$^{1}$, R. N. Banavar$^{2}$ and D. H. S. Maithripala $^{3}$
\thanks{$^{1}$ A. Nayak is with Systems and Control Engineering,
Indian Institute of Technology Bombay, India. {\tt\small aradhana@sc.iitb.ac.in}}
\thanks{$^{2}$R. N. Banavar is currently visiting professor at Department of Electrical Engineering, Indian Institute of Technology Gandhinagar, India.{\tt\small banavar@iitb.ac.in}}
\thanks {$^{3}$ D. H. S. Maithripala
	is with the University of Peradeniya, Sri Lanka. {\tt \small smaithri@pdn.ac.lk}
        }%
}
\begin{document}

\maketitle
\thispagestyle{empty}
\pagestyle{empty}


\begin{abstract}
%
Almost-global orientation trajectory tracking for a rigid body with external actuation has been well studied in the literature, and in the geometric setting as well. The tracking control law relies on the fact that a rigid body is a simple mechanical system (SMS) on the $3-$dimensional group
of special orthogonal matrices. However, the
problem of designing feedback control laws for tracking using internal actuation mechanisms, like rotors or control moment gyros, has received
lesser attention from a geometric point of view. An internally actuated rigid body is not a simple mechanical system, and the phase-space here evolves on the
level set of a momentum map. In this note, we propose a  novel
proportional integral derivative (PID) control law for a rigid body with $3$ internal rotors, that achieves tracking of feasible trajectories from almost all initial conditions.
\end{abstract}

\section{Introduction}
Spacecrafts are actuated either through internal or external mechanisms. External mechanisms
include gas jet thrusters while internal mechanisms include spinning rotors and control moment
gyros. In the recent past, there has been an increased interest in the design of coordinate-free control laws for simple mechanical systems (\cite{bulo}, defined in section II) which evolve on Lie groups. Results on stabilization of a rigid body, which is an SMS on $SO(3)$, about a desired configuration in $SO(3)$ using proportional plus derivative (PD) control are found in \cite{ejc}, \cite{crouch}, \cite{buloandmurray}. Geometric tracking of specific mechanical systems such as a quadrotor, which is an SMS on $SE(3)$, and a rigid body, can be found in \cite{tleemleok}, \cite{tlee}. Almost-global tracking of a reference trajectory for an SMS on a Lie group implies tracking of the reference from almost all initial conditions in the tangent bundle of the Lie group.  A general result on almost-global asymptotic tracking (AGAT) for an SMS on a class of compact Lie groups is found in \cite{dayawansa}. In all these results, the rigid body is assumed to be externally actuated and the control torque is supplied through actuators such as gas jets. Almost-global stabilization and tracking of the externally actuated rigid body is, therefore, a sufficiently well studied problem. However, the problem of geometric tracking for an internally actuated rigid body has received much less attention.

Interconnected mechanical systems have been studied in the context of spherical mobile robots in \cite{sneha}, \cite{cai}, \cite{karanaev} et al.The stabilization of the internally actuated rigid body is studied in \cite{bkmm}. It is shown that any feedback torque on the externally actuated rigid body can be realised with $3$ internal rotors attached to the rigid body. This paper
illustrates the fact that despite the feedback forces acting on an externally actuated rigid body, it is still Hamiltonian and behaves like a heavy rigid body. In other words, if a certain class of feedback torque is applied to the rotors, the rigid body with the rotors is a fully actuated SMS on $SO(3)$. This motivates us to study the AGAT problem for a rigid body with rotors as AGAT for an SMS has been studied extensively (\cite{dayawansa},\cite{anrnb2}, \cite{pidmtp}).
%
%
\newline
In \cite{ejc} and \cite{weiss} the almost-global asymptotic stabilization (AGAS) problem of a rigid body with $3$ internally mounted rotors is solved using proportional plus derivative (PD) control. In \cite{hall}, rigid body tracking is achieved using both external and internal actuation using local representation for rotation matrices. In \cite{madhumtp}, the trajectory tracking problem is considered for a hoop robot with internal actuation such as a pendulum. It is shown that a class of internal actuation configurations exist for which the underactuated mechanical system can be converted to a fully actuated SMS by feedback torques.
\newline
AGAT of an SMS on a Lie group is often achieved by a proportional plus derivative type control(\cite{anrnb2}, \cite{dayawansa}). A configuration error is chosen on the Lie group with the help of the group operation along with a compatible navigation function. A navigation function is a Morse function with a unique minimum. The closed loop error dynamics,  for a control force proportional to the negative gradient covector field generated by the navigation function plus a dissipative covector field, is then an SMS,. This control drives the error dynamics to the lifted minimum of the navigation function on the tangent bundle of the Lie group from all but the lifted saddle points and maxima of the navigation function. As the critical points of a Morse function are isolated, this convergence is almost global. The compatibility conditions in \cite{anrnb2} ensure that the error function is symmetric and achieves its minimum when two configurations coincide. In \cite{pidmtp}, the authors propose an 'integral' action to the existing PD control law in \cite{dayawansa}. The addition of an integral term makes the control law robust to bounded parametric uncertainty.
%
%
%
%
%
%

A rigid body with internal rotors is an underactuated, interconnected simple mechanical system. The control torque provided to the rotors gets reflected through the interconnection mechanism to the rigid body. Due to absence of external forces, the total angular momentum is conserved. This restriction implies that only a certain class of angular velocities can be attained at any configuration in $SO(3)$. Also, due to the presence of quadratic rotor velocity terms, the rigid body alone is not an SMS. We isolate the rigid body dynamics by introduction of feedback control terms in the system dynamics so that the closed loop rigid body dynamics is a fully actuated SMS. Thereafter we apply the existing AGAT control to the rigid body and obtain the corresponding rotor trajectories from the rotor dynamics. As the control objective is to track a suitable reference trajectory on $SO(3)$, the rotor speeds are allowed to be arbitrary. The paper is organised as follows- in section II, after
presenting a few mathematical preliminaries, we derive equations for the rigid body with external actuation and with $3$ internal rotors. In section III we append an integral term to the control law for AGAT of an SMS on a Lie group in \cite{anrnb2} and propose the AGAT control for the rigid body with $3$ rotors for an admissible class of reference trajectories. In section IV we present simulation results for the proposed control law.


\section{Preliminaries}
This section introduces conventional mathematical notions to describe simple mechanical systems which can be found in \cite{bulo}, \cite{mars}, \cite{arnold}. A Riemannian manifold is denoted by the 2-tuple $(Q, \mathbb{G})$, where $Q$ is a smooth connected manifold and $\mathbb{G}$ is the metric on $Q$. $\nabla$ denotes the Riemannian connection on $(Q, \mathbb{G})$ (\cite{petersen},\cite{yano}).The flat map $\mathbb{G}^{\flat} :  T_q Q \to T_q ^* Q$ is given by $ \mathbb{G}(v_1, v_2) = \langle \mathbb{G}^{\flat} (v_1); v_2 \rangle$ for $v_1$,$v_2 \in T_q Q$ and the sharp map is its dual $ \mathbb{G}^{\sharp}: T_q ^* Q \to T_q Q$, and given by $ \mathbb{G}^{-1}(w_1, w_2)= \langle \mathbb{G}^{\sharp}(w_1); w_2 \rangle$ where $w_1$, $w_2 \in T_q ^ * Q$. Therefore if $\{ e^i\}$ is a basis for $T^*_q Q$, $ \mathbb{G}^{\flat} (v_1) = \mathbb{G}_{ij} {v_1}^j e^i$ and $ \mathbb{G}^{\sharp} (w_1) = \mathbb{G}^{ij} {w_1}_j e_i.$
\begin{definition}(SMS)
  A simple mechanical system (or SMS) on a smooth manifold $Q$ with a metric $\mathbb{G}$ is denoted by the 7-tuple $(Q,\mathbb{G},V, F, \mathcal{F}, U)$, where $V$ is a potential function on $Q$, $F$ is an external uncontrolled force, $\mathcal{F}= \{F^1 \dotsc F^m\}$ is a collection of covector fields on $Q$, and $U \subset \mathbb{R}^m$ is the control set. The system is fully actuated if $T_{q}^{*} Q = span\{\mathcal{F}_{q}\}$, $\forall q \in Q$. The governing equations for the above SMS without any control input is given by
\begin{equation}\label{dynrm}
\stackrel{\mathbb{G}}{\nabla}_{\dot{\gamma}(t)} \dot{\gamma}(t) = -grad V(\gamma(t)) +\mathbb{G}^{\sharp} (F(\dot{\gamma}(t)))
\end{equation}
where $grad V(\gamma(t)) = \mathbb{G}^{\sharp}\mathrm{d}V(\gamma(t))$ and $\gamma(t)$ is the system trajectory.
\end{definition}
%
%
%
%
Let $G$ be a Lie group and let $\mathfrak{g}$ denote its Lie algebra. Let $\phi: G \times G \to G$ be the left group action in the first argument defined as $\phi(g,h) \coloneq  L_{g} (h)= gh $ for all $g$, $h \in G$. The infinitesimal generator corresponding to $\xi \in \mathfrak{g}$ is $\xi_Q \in \Gamma ^{\infty} (TQ)$ which is defined as $\xi_Q(q) = \frac{\mathrm{d}}{\mathrm{d}t}|_{t=0} \phi (exp(t\xi), q)$, where $exp$ denotes the exponential map. The Lie bracket on $\mathfrak{g}$ is $[,]$. The \textit{adjoint map}, $ad_\xi : \mathfrak{g} \to \mathfrak{g}$ for $\xi \in \mathfrak{g}$ is defined as $ad_\xi \eta \coloneq [\xi, \eta]$ for $\eta \in \mathfrak{g}$. Let $\mathbb{I} :\mathfrak{g} \to \mathfrak{g}^*$ be an isomorphism from the Lie algebra to its dual. The inverse is denoted by $\mathbb{I}^\sharp: \mathfrak{g}^* \to \mathfrak{g}$. $\mathbb{I}$ induces a
left invariant metric on $G$ (\cite{bulo}), which we denote by $\mathbb{G}_{\mathbb{I}}$ and define by the following $\mathbb{G}_{\mathbb{I}}(g).(X_g,Y_g) = \langle \mathbb{I}(T_gL_{g^{-1}} (X_g)),T_gL_{g^{-1}} (Y_g)\rangle$ for all $g \in G$ and $X_g$, $Y_g \in T_g G$. The equations of motion for the SMS $(G, {\mathbb{I}},F)$ where $F \in \mathfrak{g}^*$ are derived from \eqref{dynrm} given by
\begin{align}\label{dynliegrp}
\xi &= T_g L_{g^{-1}} \dot{g},\\ \nonumber
\dot{\xi} - \mathbb{I}^\sharp ad^*_\xi \mathbb{I} \xi &= \mathbb{I}^{\sharp} F
\end{align}
where $g(t)$ describes the system trajectory. $\xi(t)$ is called the \textit{body velocity} of $g(t)$.
\subsection{Dynamics of a rigid body with external actuation}
Consider a rigid body with external actuation provided through gas jets mounted on the principal axes. Let $I \in \mathbb{R}^{3 \times 3}$ denote the moment of inertia of the rigid body in the body frame, $u \in \mathfrak{so(3)} \sim \mathbb{R}^3 $ be the control vector field applied to the gas jets, $R(t) \in SO(3)$ be the system trajectory and $\Omega(t)$ be the body velocity of $R(t)$. As this is a SMS given by $(SO(3), I, Iu)$, the equations of motion are given by \eqref{dynliegrp} with $F = Iu$ as follows
\begin{subequations}\label{dynrigext}
\begin{equation}
\dot{R}= R \widehat{\Omega}
\end{equation}
\begin{equation}
\dot{\Omega}-I^\sharp ad^*_\Omega I \Omega = u_{ext}
\end{equation}
\end{subequations}

\subsection{Dynamics of a rigid body with internal actuation}
\begin{figure}[h!]
\centering
\includegraphics[scale=0.4]{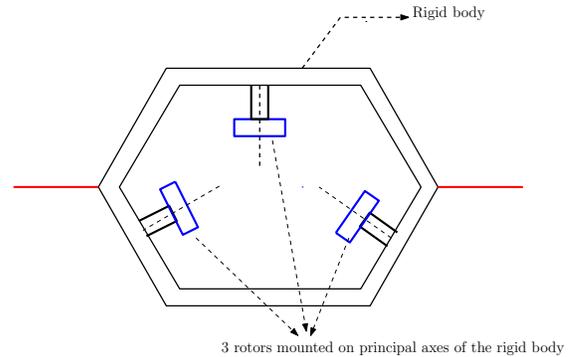}
\caption{Rigid body with 3 rotors}\label{figrigrot}
\end{figure}
In the internal actuation case, for a rigid body with 3 rotors,
 the configuration space is $Q=SO(3) \times S^1 \times S^1 \times S^1$ and the configuration variable is denoted as $q= (R, \Theta)$ where $R \in SO(3)$ and $\Theta= \begin{pmatrix}
           \theta_1 & \theta_2 & \theta_3
         \end{pmatrix}^T$, $\theta_i \in S^1$ for $i= 1, 2,3$. The rotors are assumed to be mounted on the principal axes of the rigid body as shown in Figure ~\ref{figrigrot}. The moment of inertia of the rigid body is $I\in \mathbb{R}^{3 \times 3}$  in the rigid body frame. $K= diag(k_1, k_2,k_3)$ is the inertia matrix of the $3$ rotors in the rigid body frame, where $k_i$ is moment of inertia of $i$th rotor, $i =1,2,3$. In the absence of potential energy, the Lagrangian is chosen to be the kinetic energy and given as
\begin{equation}\label{L}
  L(q, \dot{q}) = \frac{1}{2}\langle\Omega , I \Omega \rangle + \frac{1}{2}\langle \Omega+ \Omega_r, K(\Omega+ \Omega_r) \rangle
\end{equation}
 where $\hat{\Omega} = {R^{-1} \dot{R}}$ and $\Omega_r = \begin{pmatrix}
                                                           \dot{\theta}_1 & \dot{\theta}_2 & \dot{\theta}_3
                                                         \end{pmatrix}^T$.
The manifold $Q$ is a trivial principal $G-$ bundle as $Q= G \times S$ where the Lie group is $G= SO(3)$ and the shape space is $S = S^1 \times S^1 \times S^1$. Using the trivialization in $Q$ it can be shown that $(TQ)$ is locally diffeomorphic to $TS \times \mathfrak{g} \times G$. Therefore, $(TQ)/G$ is diffeomorphic to $TS \times \mathfrak{g}$. Further, as the lagrangian in \eqref{L} is invariant under the action of the $G$, it reduces from a function on $TQ$ to a function on $(TQ )/ G $. By the local trivialization coordinates on $(TQ )/ G $ are $ (\Theta, \Omega_r, \Omega)$ where, $(\Theta, \Omega_r)$ are coordinates for $TS$ and $\Omega$ is the coordinate for $\mathfrak{g}$. Therefore, the reduced Lagrangian $l : (TQ)/G \to \mathbb{R}$ is
\begin{equation}\label{l}
  l(\Theta, \dot{\Theta}, \Omega)= \frac{1}{2}\langle\Omega , I \Omega \rangle + \frac{1}{2}\langle \Omega+ \Omega_r, K(\Omega+ \Omega_r) \rangle
\end{equation}
The variational principle with the reduced Lagrangian is applied by dividing variations $\delta q$ of $q$ into those only in $\Theta$ and those only in $R$. In $\Theta$, we get the usual Euler Lagrange equations, while in $R$, we obtain Euler Poincare equations. The equations of motion are together called \textit{Hamel equations} (\cite{marskrish})
\begin{subequations}\label{dynrigrot}
  \begin{equation}
    \frac{\mathrm{d}}{\mathrm{d}t} \bigg( \frac{\partial l}{\partial \Omega}\bigg) = ad^*_\Omega \bigg( \frac{\partial l}{\partial \Omega} \bigg),
  \end{equation}
  \begin{equation}
    \frac{\mathrm{d}}{\mathrm{d}t} \bigg(  \frac{\partial l}{\partial \Omega_r} \bigg) +  \frac{\partial l}{\partial \Theta} =u_{int}
  \end{equation}
\end{subequations}
where $u_{int} \in T_\Theta( S^1 \times S^1 \times  S^1)$ is the control input  applied to the rotors and the reconstruction equation is given as $\hat{\Omega }= {R^{-1} \dot{R}}$. Substituting for $l$ from \eqref{l} in \eqref{dynrigrot},
\begin{subequations}\label{redynrigrot}
  \begin{equation}
    (I+K) \dot{\Omega} + K\dot{\Omega}_r = \Pi \times \Omega,
  \end{equation}
  \begin{equation}
   K(\dot{\Omega} + \dot{\Omega}_r) =u_{int}
  \end{equation}
\end{subequations}
where $\Pi =(I+K) \Omega+K \Omega_r $ is the body angular momentum. The reconstruction equation for attitude of spacecraft and angular displacement of rotors is given by
\[\dot{R} = R \hat{\Omega}, \qquad \dot{\Theta}= \Omega_r\]
respectively.\\
\textit{Remark:} \eqref{redynrigrot} can be written as
\begin{equation}\label{form}
  \begin{pmatrix}
    I+K & K \\
    K & K
  \end{pmatrix} \begin{pmatrix}
                  \dot{\Omega} \\
                  \dot{\Omega_r}
                \end{pmatrix} = \begin{pmatrix}
                  \Pi \times {\Omega} \\
                  0
                \end{pmatrix}+ \begin{pmatrix}
                                 0 \\
                                 u_{int}
                               \end{pmatrix}
\end{equation}
Therefore, \eqref{form} is an underactuated SMS.

\section{Geometric objects and admissible trajectories}
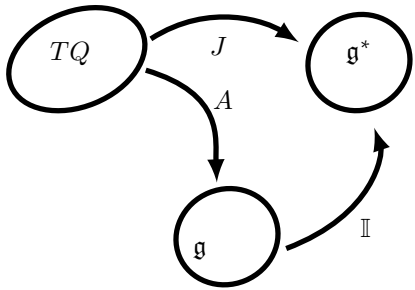
\begin{figure}[ht]
\centering
\begin{tikzpicture}[y=0.80pt, x=0.80pt, yscale=-0.4, xscale=0.4, inner sep=0pt, outer sep=0pt]
\begin{scope}[shift={(0,-698.0315)}]
  \path[cm={{0.92509,-0.37974,0.37974,0.92509,(0.0,0.0)}},draw=black,miter
    limit=4.00,line width=2.000pt] (-202.1721,751.9123) ellipse (2.3566cm and
    1.5804cm);
  \path[cm={{0.92509,-0.37974,0.37974,0.92509,(0.0,0.0)}},draw=black,miter
    limit=4.00,line width=2.000pt] (104.8280,882.2546) ellipse (1.7132cm and
    1.5804cm);
  \path[cm={{0.92509,-0.37974,0.37974,0.92509,(0.0,0.0)}},draw=black,miter
    limit=4.00,line width=2.000pt] (-117.7124,1013.5853) ellipse (1.7132cm and
    1.5804cm);
  \path[-latex, draw=black,line join=miter,line cap=butt,miter limit=4.00,even odd
    rule,line width=2.000pt] (186.0000,748.3622) .. controls (256.0000,703.3622)
    and (324.0000,728.3622) .. (361.0000,752.3622);
  \path[-latex, draw=black,line join=miter,line cap=butt,miter limit=4.00,even odd
    rule,line width=2.000pt] (346.0000,996.3622) .. controls (452.0000,956.3622)
    and (460.0000,883.3622) .. (450.0000,853.3622);
  \path[-latex, draw=black,line join=miter,line cap=butt,miter limit=4.00,even odd
    rule,line width=2.000pt] (180.0000,785.3622) .. controls (274.0000,813.3622)
    and (262.0000,867.3622) .. (264.0000,918.3622);
  \path[fill=black,line join=miter,line cap=butt,line width=0.800pt]
    (67.0000,752.3622) node[below right] (text4195) {$TQ$};
  \path[fill=black,line join=miter,line cap=butt,line width=0.800pt]
    (416.9646,777.9581) node[above right] (text4195-7) {$\mathfrak{g}^{*}$};
  \path[fill=black,line join=miter,line cap=butt,line width=0.800pt]
    (235.9499,987.5004) node[below right] (text4195-7-3) {$\mathfrak{g}$};
  \path[fill=black,line join=miter,line cap=butt,line width=0.800pt]
    (259.7468,831.2916) node[above right] (text4195-6-9) {$A$};
  \path[fill=black,line join=miter,line cap=butt,line width=0.800pt]
    (255.7468,746.2916) node[below right] (text4195-6-2) {$J$};
  \path[fill=black,line join=miter,line cap=butt,line width=0.800pt]
    (433.9362,961.2916) node[below right] (text4195-6-2-3) {$\mathbb{I}$};
\end{scope}

\end{tikzpicture}
\caption{Relationship between the mechanical connection, locked inertia tensor and momentum map in a principle fibre bundle} \label{fig19}
\end{figure}
The momentum map $J : T(SO(3) \times S^1 \times S^1\times S^1) \to \mathfrak{so(3)}^*$ gives the conserved quantity along trajectories to \eqref{dynrigrot} as the Lagrangian (in \eqref{L}) is invariant with respect to action of $SO(3)$. $J: TQ \to \mathfrak{g}^*$ is defined as
\begin{equation}\label{Jdef}
  \langle J(q, v), \xi \rangle= \ll v, \xi_Q(q)\gg
\end{equation}
for $(q,v) \in TQ$ and $\xi \in \mathfrak{g}$.
The \textit{mechanical connection} $A: TQ \to \mathfrak{g}$ is expressed in terms of connection coefficient $A(\Theta)$ in the body frame as
\begin{equation}\label{condef}
  A(\Theta, R, \Omega_r, \Omega)= A(\Theta) \Omega_r + \Omega
\end{equation}
$A(\Theta): T_{\Theta} (S^1 \times S^1 \times S^1) \to \mathfrak{so(3)}$ is the obtained from \eqref{L} as $A(\Theta) = (I+K)^{-1}K$ (details in \cite{ostrowski}) and therefore,
\begin{equation}\label{con}
  A(\Theta, R,\Omega_r, \Omega)= (I+K)^{-1}K \Omega_r + \Omega
\end{equation}
The \textit{locked inertia tensor} is $\mathbb{I}(\Theta, R): \mathfrak{so(3)} \to \mathfrak{so(3)}^*$ in body frame is the obtained from \eqref{L} as $\mathbb{I}= (I+K)$ and in the inertial frame as $\mathbb{I}(\Theta, R)= R(I+K)R^{-1}$. It is observed that
\begin{equation}\label{Jrel}
   A(\Theta, R,\Omega_r, \Omega)= \mathbb{I}^{-1} (\Theta, R) J(\Theta, R,\Omega_r, \Omega)
\end{equation}
where $J: TQ \to \mathfrak{g}$ is defined in \eqref{Jdef}. Details of this result can be found in \cite{marskrish} and \cite{mars}. Therefore, from \eqref{Jrel}, the momentum map in rigid body frame is
\begin{equation}\label{Jmap}
  \Pi = \mathbb{I} ((I+K)^{-1}K \Omega_r + \Omega)= (I+K) \Omega+K \Omega_r
\end{equation}
and the momentum map in the inertial frame gives the conserved angular momentum which is
\begin{equation}\label{Jmapin}
  J(\Theta, R,\Omega_r, \Omega) = \mathbb{I}(\Theta, R) A(\Theta, R,\Omega_r, \Omega)= R \Pi
\end{equation}
Here we assume the reference trajectory is generated by another rigid body with $3$ rotors.
If the spatial angular momentum of the system is $\mu$, any reference trajectory must lie in the $\mu$ level set of $J$. In other words, the set of reference trajectories is given as
\begin{equation}\label{reftraj}
  S=\{(R_d, \Theta_d, \Omega_d, \Omega_{rd})\in TQ :J(\Omega_d, \Omega_{rd}) = \mu \}
\end{equation}
where $\mu$ of the system is given by \eqref{Jmap} and $\Omega_d= T_ {R_d} L_{R_d^{-1}} \dot{R}_d$.

\section{AGAT control}
In this section we first state the result from \cite{anrnb2} for AGAT of a fully actuated rigid body with external actuators and subsequently extend it to AGAT for the rigid body with $3$ internal rotors.
\begin{definition}\label{navfn}
A function $\psi: G \to  \mathbb{R} $ on a Lie group $G$ is a navigation function (\cite{kodi}) if
\begin{enumerate}
  \item $\psi$ has a unique minimum.
  \item All critical points of $\psi$ are non-degenerate; $Det(Hess \psi(q)) \neq 0$ whenever $\mathrm{d}\psi(q) = 0$ for $q \in G$.
\end{enumerate}
\end{definition}
\begin{definition}
 The configuration error on a Lie group $G$ is the map $E: G \times G \to G$ defined as
\begin{equation} \label{errmap}
E(g, g_r) = L_{g_r}g^{-1}.
\end{equation}
\end{definition}
\begin{definition}
  Consider a Lie group $G$ and a navigation function $\psi:G \to \mathbb{R}$. The configuration error map $E$ is compatible with a navigation function $\psi$ for the tracking problem if
  \begin{itemize}
    \item $\psi \circ E :G \times G \to \mathbb{R}$ is symmetric; or, $\psi(E(g,h))= \psi(E(h,g))$ for all $g,h \in G$.
    \item $E(e) = q_0$, where $q_0$ is the minimum of the navigation function and $e$ is the identity of $G$.
  \end{itemize}
\end{definition}
The AGAT problem for a rigid body with rotors is solved in two parts. In the first part a PID control law is proposed for AGAT of a rigid body with external actuation and in the second part a feedback control law is chosen so that the equations for a rigid body with rotors reduce to an externally actuated rigid body. The first problem is well addressed in literature (\cite{pidmtp}, \cite{dayawansa}, \cite{anrnb2}). In \cite{anrnb2}, a proportional derivative (PD) and feed-forward (FF) control law achieves AGAT of a reference trajectory on $SO(3)$. We introduce an additional integral control term to the PD+FF tracking control along the lines of \cite{pidmtp} in the following theorem.

\begin{theorem}\label{thm1}
  (AGAT for an SMS on a Lie group) Let $G$ be a compact Lie group and $\mathbb{I}: \mathfrak{g} \to \mathfrak{g}^*$ be an isomorphism on the Lie algebra. Consider the SMS on the Riemannian manifold $(G, {\mathbb{I}})$ given by \eqref{dynliegrp} and a smooth reference trajectory with bounded velocity $g_r :\mathbb{R} \to G$ on the Lie group. Let $\psi : G \to \mathbb{R}$ be a navigation function compatible with the error map in \eqref{errmap}. Then there exists an open dense set $S$ in $G \times \mathfrak{g}$ such that AGAT of $g_r$ is achieved for all $(g(0), \xi(0)) \in S$ with $u = \mathbb{I}^\sharp(F)$ in \eqref{dynliegrp} given by the following equation
  \begin{align}\label{thm1eqn}
    u &= - g^{-1}_r \mathbb{G}_{\mathbb{I}}^\sharp (-k_p \mathrm{d}\psi (E) - k_d \dot{E} - k_I \xi_I) g + g^{-1} (\stackrel{\mathfrak{g}}{\nabla}_\eta \eta  \\ \nonumber
    &+  \frac{\mathrm{d}}{\mathrm{d}t}{E^{-1}}\mathrm{d}_2E(\dot{g}_r)) g - I^\sharp ad^*_\xi I \xi
  \end{align}
  where $\eta:= T_E L_{E^{-1}} \dot{E}$, $k_p$, $K_d$ and $k_I$ are constants to be chosen as shown in Appendix A and $\xi_I$ is defined as
  \begin{equation}\label{xiI}
    \stackrel{\mathbb{G}_{\mathbb{I}}}{\nabla}_{\dot{E}} \xi_I = \mathbb{G}_{\mathbb{I}}^\sharp \mathrm{d} \psi (E).
  \end{equation}
\end{theorem}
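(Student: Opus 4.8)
\emph{Proof strategy.} The plan is to show that under the feedback \eqref{thm1eqn} the closed loop, written in terms of the configuration error $E = E(g,g_r) = L_{g_r}g^{-1}$, becomes the autonomous forced simple mechanical system
\[ \stackrel{\mathbb{G}_{\mathbb{I}}}{\nabla}_{\dot E}\dot E \;=\; -k_p\,\mathbb{G}_{\mathbb{I}}^\sharp\mathrm{d}\psi(E)\;-\;k_d\,\dot E\;-\;k_I\,\xi_I, \]
coupled with the filter \eqref{xiI} for $\xi_I$, and then to analyse this error system by a Lyapunov/LaSalle argument. First I would compute $\dot E$ from $E = g_r g^{-1}$ and then the covariant acceleration $\stackrel{\mathbb{G}_{\mathbb{I}}}{\nabla}_{\dot E}\dot E$, using left invariance of $\mathbb{G}_{\mathbb{I}}$, metric compatibility of $\stackrel{\mathbb{G}_{\mathbb{I}}}{\nabla}$, and the chain rule for the group product; this produces, besides the term carrying $u$ through $\dot\xi$, extra terms in $\dot g_r$, $\ddot g_r$ and the Euler--Poincar\'e drift $\mathbb{I}^\sharp ad^*_\xi\mathbb{I}\xi$. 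The three non-PID pieces of \eqref{thm1eqn} --- the conjugated feed-forward $g^{-1}\bigl(\stackrel{\mathfrak{g}}{\nabla}_\eta\eta + \tfrac{\mathrm{d}}{\mathrm{d}t}E^{-1}\mathrm{d}_2E(\dot g_r)\bigr)g$, the conjugations of the PID covector by $g_r$ and $g$, and the drift canceller $-I^\sharp ad^*_\xi I\xi$ --- are designed precisely to absorb these terms and to transport the PID forcing from the body frame into the error frame; verifying these cancellations is the first, essentially computational, step, and it parallels the feed-forward construction of \cite{anrnb2} with the integral term appended as in \cite{pidmtp}.

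Second, I would take the PID storage function
\[ \mathcal V \;=\; k_p\,\psi(E)\;+\;\tfrac12\,\mathbb{G}_{\mathbb{I}}(\dot E,\dot E)\;+\;\varepsilon\,\mathbb{G}_{\mathbb{I}}(\dot E,\xi_I)\;+\;\tfrac{k_I}{2}\,\mathbb{G}_{\mathbb{I}}(\xi_I,\xi_I) \]
for a small $\varepsilon>0$, as in \cite{pidmtp}. Differentiating along the closed-loop error system and using \eqref{xiI}, the cross terms in $\mathrm{d}\psi(E)$ cancel telescopically, and choosing $k_p,k_d,k_I,\varepsilon$ subject to the inequalities collected in Appendix A (which balance $k_d$ and $\varepsilon$ against a Hessian bound for $\psi$ and the velocity bound on $g_r$) makes $\dot{\mathcal V}$ negative semidefinite, with $\dot{\mathcal V}=0$ only where $\dot E=0$ and $\xi_I=0$. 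Compactness of $G$ gives precompact orbits, so LaSalle's invariance principle drives every trajectory into the invariant set $\{\dot E=0,\ \xi_I=0,\ \mathrm{d}\psi(E)=0\}$; that is, $E(t)$ approaches the critical set of $\psi$.

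Third, I would upgrade this to almost-global tracking. Since $\psi$ is a navigation function its critical points are isolated and non-degenerate, with a unique minimum $q_0$; at $(E,\dot E,\xi_I)=(q_0,0,0)$ the function $\mathcal V$ has a strict local minimum, so this equilibrium is asymptotically stable and its basin of attraction is open. At any other critical point $\mathrm{Hess}\,\psi$ has a negative eigendirection, and a Chetaev function built from that eigenspace --- the integral state $\xi_I$ enters only through the exponentially benign filter \eqref{xiI} and does not obstruct the construction --- shows that the equilibrium is unstable with stable set of positive codimension. The union of these finitely many stable sets is nowhere dense, so its complement is the desired open dense set $S \subset G\times\mathfrak{g}$; on $S$ the error converges to $q_0$, which by compatibility of $E$ with $\psi$ is exactly $g(t)\to g_r(t)$ with velocities converging. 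The point I expect to be delicate is the first step: tracking the conjugations by $g$ and $g_r$ and the time derivative of the feed-forward term so that the error dynamics genuinely close into an autonomous system, and then extracting a clean sign for $\dot{\mathcal V}$ in the presence of the integral coupling --- precisely what forces the gain conditions of Appendix A.
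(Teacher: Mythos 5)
Your overall architecture matches the paper's: close the loop into the autonomous PID error dynamics $\stackrel{\mathbb{G}_{\mathbb{I}}}{\nabla}_{\dot E}\dot E=-k_p\mathbb{G}_{\mathbb{I}}^\sharp\mathrm{d}\psi(E)-k_d\dot E-k_I\xi_I$ together with \eqref{xiI} (the paper takes this step essentially from \cite{anrnb2}, with the integral term appended as in \cite{pidmtp}), then a Lyapunov analysis with cross terms, then almost-global convergence from the navigation-function structure (the paper invokes \cite{kodi} where you propose a Chetaev argument at the saddles). The genuine gap is in your second step: the storage function
\[
\mathcal V = k_p\psi(E)+\tfrac12\,\mathbb{G}_{\mathbb{I}}(\dot E,\dot E)+\varepsilon\,\mathbb{G}_{\mathbb{I}}(\dot E,\xi_I)+\tfrac{k_I}{2}\,\mathbb{G}_{\mathbb{I}}(\xi_I,\xi_I)
\]
is too weak to yield $\dot{\mathcal V}\le 0$. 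Differentiating along the closed loop and using $\stackrel{\mathbb{G}_{\mathbb{I}}}{\nabla}_{\dot E}\xi_I=\mathbb{G}_{\mathbb{I}}^\sharp\mathrm{d}\psi(E)$ gives
\begin{align*}
\dot{\mathcal V} &= -k_d\|\dot E\|^2-(k_I+\varepsilon k_d)\ll\xi_I,\dot E\gg-\varepsilon k_I\|\xi_I\|^2\\
&\quad+(k_I-\varepsilon k_p)\ll\xi_I,\mathbb{G}_{\mathbb{I}}^\sharp\mathrm{d}\psi(E)\gg+\varepsilon\ll\dot E,\mathbb{G}_{\mathbb{I}}^\sharp\mathrm{d}\psi(E)\gg,
\end{align*}
so the terms in $\mathrm{d}\psi$ do \emph{not} cancel telescopically: even with the choice $\varepsilon=k_I/k_p$ the indefinite term $\varepsilon\ll\dot E,\mathbb{G}_{\mathbb{I}}^\sharp\mathrm{d}\psi(E)\gg$ survives, and since $\dot{\mathcal V}$ contains no negative term quadratic in $\|\mathbb{G}_{\mathbb{I}}^\sharp\mathrm{d}\psi(E)\|$, no choice of gains (and no Young-type splitting) makes the derivative sign-definite near points with $\dot E\approx0$, $\xi_I\approx0$, $\mathrm{d}\psi(E)\neq0$ --- exactly the region you must rule out as a limit set. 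Hence your claim that $\dot{\mathcal V}$ is negative semidefinite with zero set $\{\dot E=0,\ \xi_I=0\}$ does not follow, and the LaSalle step collapses.

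The repair is precisely what the paper (following \cite{pidmtp}) does in Appendix~A: augment the storage function with \emph{all three} cross terms, $\alpha\ll\mathbb{G}_{\mathbb{I}}^\sharp\mathrm{d}\psi(E),\dot E\gg$, $\beta\ll\xi_I,\dot E\gg$, $\delta\ll\mathbb{G}_{\mathbb{I}}^\sharp\mathrm{d}\psi(E),\xi_I\gg$, plus $\tfrac{\tau}{2}\|\xi_I\|^2$. Differentiating the gradient cross terms produces Hessian terms $\ll\nabla_{\dot E}\mathbb{G}_{\mathbb{I}}^\sharp\mathrm{d}\psi(E),\cdot\gg$, which are bounded by the constant $\mu$ on a compact neighborhood, and --- crucially --- a term $(\alpha k_p-\delta)\|\mathbb{G}_{\mathbb{I}}^\sharp\mathrm{d}\psi(E)\|^2$ that dominates the indefinite couplings; one then obtains $\dot E_{cl}\le -vQv^T$ with $v=(\|\dot E\|,\|\mathbb{G}_{\mathbb{I}}^\sharp\mathrm{d}\psi(E)\|,\|\xi_I\|)$ and $Q$ positive definite under the gain inequalities of Appendix~A (which also guarantee positive definiteness of $E_{cl}$ itself). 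Note this gives strict decay in all three error quantities, so the paper does not even need LaSalle; it then concludes almost-global convergence of the error system by citing \cite{kodi}, which also spares the Chetaev construction you sketch (and which you would otherwise have to carry out in the enlarged state including $\xi_I$, not just on $TG$). Your first and third steps are consistent with the paper's route; the Lyapunov function itself is the missing ingredient.
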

\begin{proof}
  Appendix  ~\ref{appA}.
\end{proof}

\begin{theorem}\label{thm2}
  (AGAT for a rigid body with $3$ rotors) Consider the rigid body with $3$ rotors in \eqref{redynrigrot} and a smooth, bounded reference trajectory $R_d :\mathbb{R} \to SO(3)$ so that $(R_d, \Omega_d) \in S$ given by \eqref{reftraj} for some $\Omega_{rd} \in \mathbb{R}^3$. Let $\psi : G \to \mathbb{R}$ be a navigation function compatible with the error map in \eqref{errmap}. Then there exists an open dense set $P$ in $G \times \mathfrak{g}$ such that AGAT of $g_r$ is achieved for all $(g(0), \xi(0)) \in P$ with $u_{int}$ in \eqref{redynrigrot} given by the following equation
  \begin{align}\label{thm2eqn}
    u_{int} &= -\breve{u_{ext}} +K(\Omega+\Omega_r)\times \Omega
  \end{align}
  where
  \begin{align}\label{thm2eqn2}
  u_{ext} &=-R^{-1}({I}^{-1} (-k_p skew(PE) -k_d E^{-1} \dot{E}-k_I E^{-1} \xi_I)\\ \nonumber
  &- {I}^{-1} ad^*_\eta {I}\eta)R +\widehat{\dot{\Omega_d}}+ \widehat{[\Omega, \Omega_d]} - {I}^\sharp ad^{*}_{\widehat{\Omega}} {I}\widehat{\Omega},
  \end{align} $P$ is a positive definite symmetric matrix, $E= R_d R^{-1}$ and $\xi_I$ is defined in \eqref{xiI}.
\end{theorem}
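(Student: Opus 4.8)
The plan is to reduce Theorem~\ref{thm2} to Theorem~\ref{thm1}: first apply a preliminary partial feedback that converts the underactuated, interconnected system \eqref{redynrigrot} into the fully actuated externally actuated rigid body \eqref{dynrigext} on $SO(3)$, and then invoke Theorem~\ref{thm1} with $G=SO(3)$ and $\mathbb{I}=I$.

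To carry out the reduction I would first eliminate the rotor acceleration: subtracting the rotor equation in \eqref{redynrigrot} from the body equation gives $I\dot{\Omega} = \Pi\times\Omega - u_{int}$, and since $\Pi = (I+K)\Omega+K\Omega_r = I\Omega + K(\Omega+\Omega_r)$ this is
\begin{equation*}
I\dot{\Omega} = I\Omega\times\Omega + K(\Omega+\Omega_r)\times\Omega - u_{int}.
\end{equation*}
Choosing $u_{int}$ as in \eqref{thm2eqn}, namely $u_{int} = -\breve{u_{ext}} + K(\Omega+\Omega_r)\times\Omega$ with $\breve{u_{ext}} = Iu_{ext}$, the quadratic coupling term cancels identically and one is left with $I\dot{\Omega} = I\Omega\times\Omega + Iu_{ext}$, i.e. $\dot{\Omega} = I^\sharp ad^*_{\Omega}I\Omega + u_{ext}$; together with $\dot{R} = R\widehat{\Omega}$ this is exactly \eqref{dynrigext}, the SMS $(SO(3),I,Iu_{ext})$ with $u_{ext}$ a free control covector. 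Two things should be checked here. The feedback $u_{int}$ is admissible because, by \eqref{thm2eqn2}, $u_{ext}$ depends only on $R$, $\Omega$, the integral state $\xi_I$ of \eqref{xiI} and the reference, so $u_{int}$ is a genuine function of the state $(R,\Omega,\Omega_r,\xi_I)$ and the reference. Moreover this feedback does not disturb the momentum constraint defining the admissible set $S$ of \eqref{reftraj}: the body equation of \eqref{redynrigrot} does not contain $u_{int}$, so $\dot{\Pi} = \Pi\times\Omega$ and hence $\tfrac{d}{dt}(R\Pi)=0$ hold for every $u_{int}$, so $R\Pi\equiv\mu$ is preserved automatically and no hidden consistency condition is introduced by the reduction.

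Next I would apply Theorem~\ref{thm1} to the closed-loop SMS $(SO(3),I,Iu_{ext})$. Since $SO(3)$ is compact, $R_d$ is smooth with bounded body velocity by hypothesis, and $\psi$ is a navigation function compatible with the error map \eqref{errmap}, Theorem~\ref{thm1} furnishes an open dense set $P\subset SO(3)\times\mathfrak{so}(3)$ from which $u_{ext}$ given by \eqref{thm1eqn} achieves AGAT of $R_d$. It then remains to verify that \eqref{thm1eqn}, specialized to $G=SO(3)$, $\mathbb{I}=I$ and written in $\mathbb{R}^3$ via $\mathfrak{so}(3)\cong\mathbb{R}^3$, coincides with \eqref{thm2eqn2}: with the standard navigation function on $SO(3)$ one has $\mathbb{G}_{\mathbb{I}}^\sharp\mathrm{d}\psi(E)= I^{-1}skew(PE)$ for a positive-definite symmetric $P$, the dissipation and integral covectors become $I^{-1}E^{-1}\dot{E}$ and $I^{-1}E^{-1}\xi_I$, the term $-I^\sharp ad^*_{\xi}I\xi$ becomes $-I^\sharp ad^*_{\widehat{\Omega}}I\widehat{\Omega}$, $\nabla^{\mathfrak g}_{\eta}\eta$ becomes $I^{-1}ad^*_{\eta}I\eta$ (up to the sign convention), and a short computation with $E=R_dR^{-1}$ and $R_d^{-1}\dot{R}_d=\widehat{\Omega}_d$ shows $\tfrac{\mathrm d}{\mathrm d t}\big(E^{-1}\mathrm d_2E(\dot g_r)\big)$ collapses, after conjugation by $R$, to $\widehat{\dot{\Omega}_d}+\widehat{[\Omega,\Omega_d]}$.

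Finally I would reconstruct the rotor motion and conclude. Once $R(t),\Omega(t)$ are determined by the closed loop and converge to $R_d,\Omega_d$, conservation of $R\Pi=\mu$ yields $K(\Omega+\Omega_r)=R^{-1}\mu-I\Omega$, which fixes $\Omega_r(t)$ algebraically ($K$ is invertible), and $\Theta(t)$ follows by integrating $\dot{\Theta}=\Omega_r$; the rotor speeds are left unconstrained, so there is no obstruction, and since $(R_d,\Omega_d)\in S$ one also gets $\Omega_r\to\Omega_{rd}$ as $R\to R_d$ and $\Omega\to\Omega_d$. Finite-time escape of the rotor subsystem is ruled out because $R$ lives in the compact group $SO(3)$, $\Omega$ stays bounded along the convergent error trajectory, and hence $u_{ext}$ and $u_{int}$ remain bounded; thus AGAT of $R_d$ holds from the open dense set $P$. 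The conceptual crux is the reduction step: showing that the quadratic rotor--body coupling in \eqref{form} is exactly cancellable by an admissible feedback while the momentum map is left invariant, so that the residual dynamics is a bona fide fully actuated SMS on $SO(3)$ to which Theorem~\ref{thm1} applies verbatim; the subsequent matching of \eqref{thm1eqn} with \eqref{thm2eqn2} and the rotor reconstruction are routine once $\mathfrak{so}(3)\cong\mathbb{R}^3$ and the explicit navigation function are fixed.
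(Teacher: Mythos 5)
Your proposal is correct and follows essentially the same route as the paper: eliminate $K\dot{\Omega}_r$ between the two equations of \eqref{redynrigrot}, use $\Pi = I\Omega + K(\Omega+\Omega_r)$ to expose the quadratic coupling term, cancel it with the feedback \eqref{thm2eqn} so the closed loop is the fully actuated SMS \eqref{dynrigext} on $SO(3)$, and then invoke Theorem~\ref{thm1} with the trace navigation function and $E = R_d R^{-1}$. Your added checks (that total angular momentum $R\Pi$ is conserved for any $u_{int}$, the explicit specialization of \eqref{thm1eqn} to \eqref{thm2eqn2}, and rotor reconstruction from the momentum level set) are details the paper leaves implicit or handles via \eqref{step4}, but they do not change the argument.
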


\begin{proof}
  Substituting for $K \dot{\Omega}_r$ from the second equation in \eqref{dynrigrot} to the first
  \begin{equation}\label{step1}
  (I+K) \dot{\Omega} + u_{int} -K \dot{\Omega} = \Pi \times \Omega
  \end{equation}
  Therefore,
  \begin{equation}\label{step2}
    I \dot{\Omega}-\Pi \times \Omega  = -u_{int}
  \end{equation}
  From \eqref{Jmap},
  \begin{equation}\label{step3}
     I \dot{\Omega}- I \Omega \times \Omega = -u_{int} +K(\Omega+ \Omega_r) \times \Omega
  \end{equation}
  From \eqref{dynliegrp}, the left hand side is a SMS on $SO(3)$ similar to \eqref{dynrigext} where $\Omega$ is the body velocity of the rigid body and the right hand side is the control field which is to be designed for AGAT of the rigid body. Therefore, theorem ~\ref{thm1} is applicable. As the objective is AGAT of the rigid body and not the rotors, we set $\breve{u_{ext}} = -u_{int} +K(\Omega+ \Omega_r) \times \Omega$ where $u_{ext} \in \mathfrak{so(3)}$ is obtained from \eqref{thm1eqn} by choosing $\psi(R) =trace(P(I-R))$ for a positive definite symmetric matrix $P$ and $E \coloneq R_d R^{-1}$ is the configuration error defined in \eqref{errmap}. It can be shown that $\psi$ is a navigation function compatible with $E$. Details of this result can be found in \cite{anrnb2}. The rotor dynamics is given by substituting for $\dot{\Omega}$ from \eqref{step2} in \eqref{redynrigrot} as
  \begin{equation}\label{step4}
    \dot{\Omega}_r = (I+K)^{-1}u_{int} - I^{-1}\Pi \times \Omega
  \end{equation}
\end{proof}
\textit{Remark}: In \cite{pidmtp}, the externally actuated rigid body is allowed to have bounded parametric uncertainty in inertia and actuation models and AGAT is achieved for the proposed PID control law. For the rigid body with rotors, however, the presence of bounded parametric uncertainty and bounded constant disturbances leads to semi-global convergence as shown in \cite{madhumtp} for interconnected mechanical systems.

\section{Simulation results}
Consider a rigid body with rotors having the following parameters $I  = \begin{pmatrix} 4 & 1 & 1\\ 1 & 5.2 & 2 \\ 1 & 2 & 6.3 \end{pmatrix}$,$ K = diag(5,6,7)$ and the following initial conditions $R(0) = \begin{pmatrix} 0.36&  0.48 &-0.8 \\ -0.8 &0.6& 0\\ 0.48 & 0.64 & 0.60\\ \end{pmatrix}$, $\Omega(0)= I^{-1} \begin{pmatrix} 1 & 2.2 & 5.1 \end{pmatrix}$ and $\Omega_r(0)= \begin{pmatrix}.5 & 1.9 & 1.5 \end{pmatrix}$. The reference trajectory is generated by a dummy rigid body with rotors having the following parameters $ I_d= \begin{pmatrix}
               1 & 0& 0\\ 0 &1.2& 0\\ 0& 0 &2 \\
               \end{pmatrix}$, $K_d = diag(4,3,2)$ and the following initial conditions $R_d(0)= id(3)$, $\Omega_{d}(0) = I_d^{-1} \begin{pmatrix} -0.8 & -0.3 & -0.5 \end{pmatrix}^T$, $\Omega_{rd}(0)$ given by the momentum conservation equation \eqref{reftraj} and, a constant input vector field $u_{int} = \begin{pmatrix}
               0;0;0
               \end{pmatrix}^T$ is applied. The dynamics of the dummy rigid body is given by \eqref{redynrigrot}. In order to find $K_p$, $K_d$ and $K_I$ we use the bounds in Appendix ~\ref{appA}. $\mu = 2 *(\lambda_{min}(I)+\lambda_{max}(I))/\lambda_{min}(I)= 2.0048$, $\lambda = 2 \frac{\lambda_{max}(I)} {\lambda_{min}(I)^2}= 1.42$. We choose $K_I=1$, $K_d =3$ and $K_p =1$ to find $u_1$ in \eqref{thm1eqn} and subsequently $u$ in \eqref{thm2eqn}. The simulation results are shown in figure ~\ref{fig13}. (https://www.dropbox.com/s/7ji88vri7mp2pxd/circle.avi?dl=0 ) is a video link showing $3$-D tracking of the axis representation in the quaternion representation. In both the cases, the reference trajectory is in red and the controlled trajectory is in blue and the plots show both the trajectories in matrix representation of $SO(3)$.
\newline
Now the reference trajectory is generated by the same dummy rigid body with $u_{int} = \begin{pmatrix}
       sin(t) & cos(t) & sin(t)
     \end{pmatrix}$ and  $u_{int} = \begin{pmatrix}
       0.2 & 0.1 & 0.2
     \end{pmatrix}$. The simulations are shown in figures ~\ref{fig7} and ~\ref{fig12} respectively.
\newline
We compare the control effort by considering the $2-$ norm of $\breve{u_{ext}}$ in \eqref{thm2eqn2} with the AGAT control for a rigid body in \cite{pidmtp}. The trace function is considered as a navigation function on $SO(3)$ with the same $P$, $k_d$, $k_p$ and $k_I$ values for both the simulations. The trajectories for tracking the same reference are plotted in figure ~\ref{fig15}. The control law for internal actuation is obtained from \eqref{thm2eqn} for the AGAT tracking law in \cite{pidmtp} and compared with the proposed control in figure ~\ref{fig17}.
  \begin{figure}[ht]
\centering
\begin{subfigure}[b]{0.22\textwidth}
  \includegraphics[scale=0.15]  {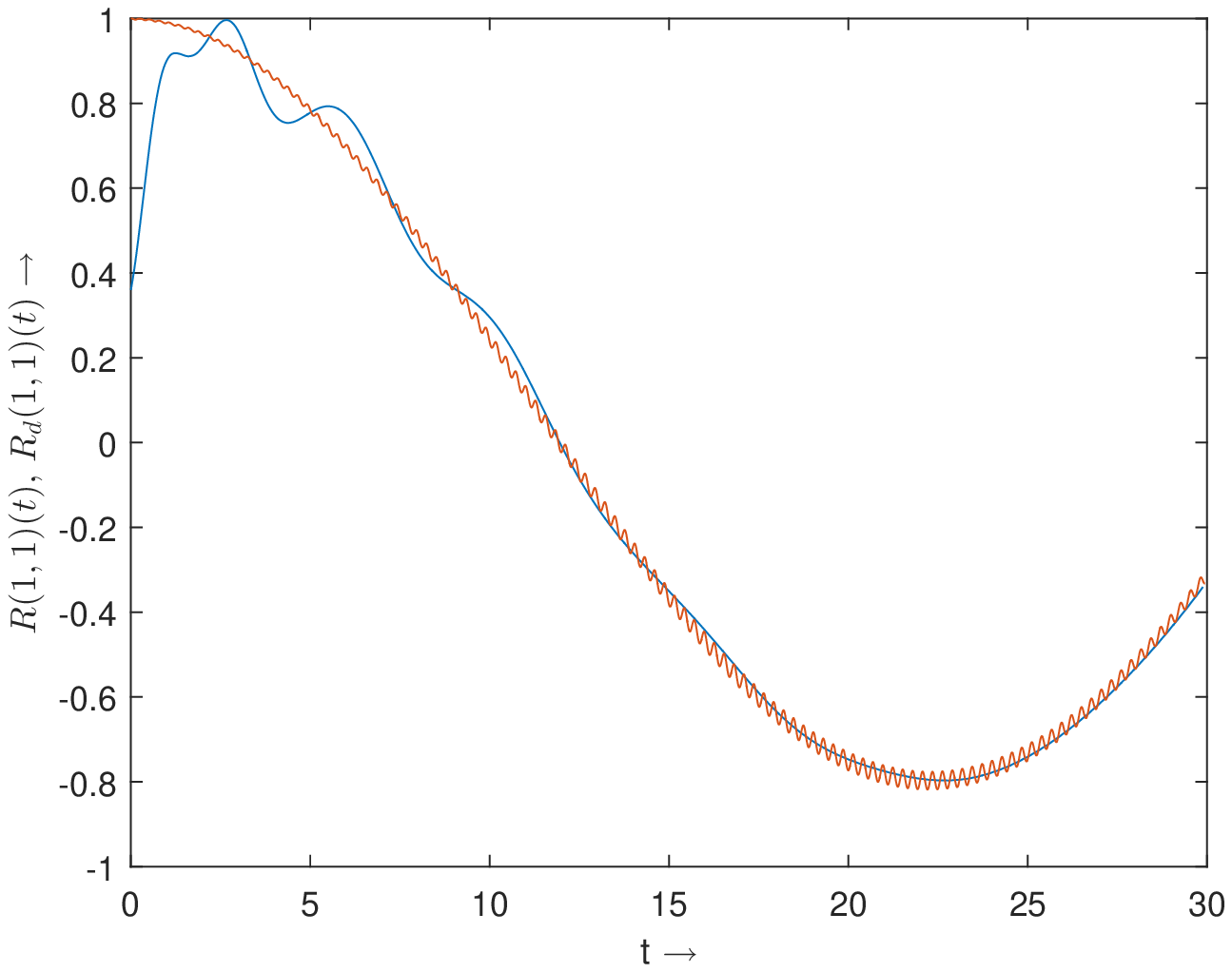}

  \caption{$(1,1)$(t) of both trajectories}
  \end{subfigure}
\begin{subfigure}[b]{0.22\textwidth}
  \includegraphics[scale=0.15]{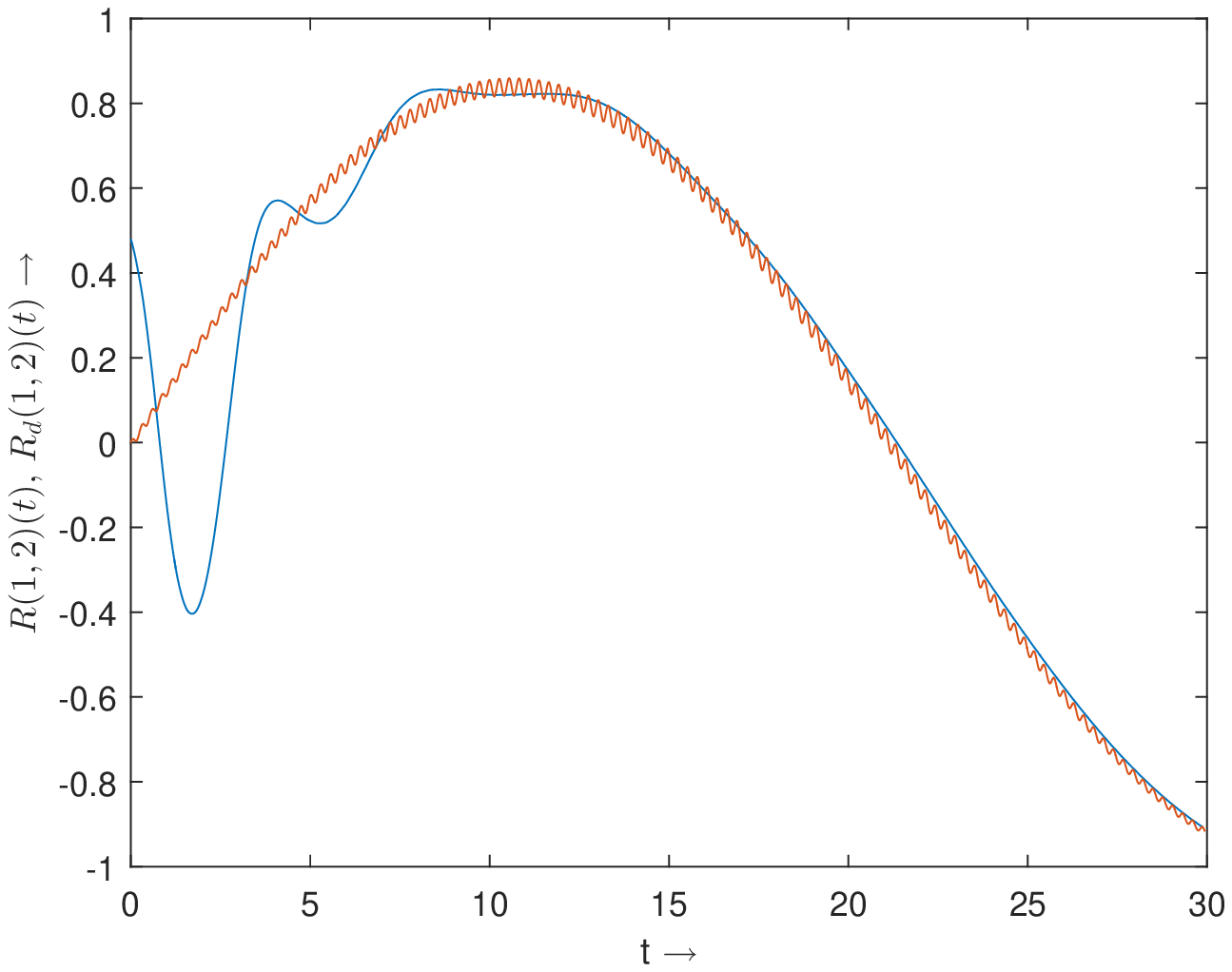}
  \caption{$(1,2)$(t) of both trajectories}
\end{subfigure}
\begin{subfigure}[b]{0.22\textwidth}
  \includegraphics[scale=0.15]{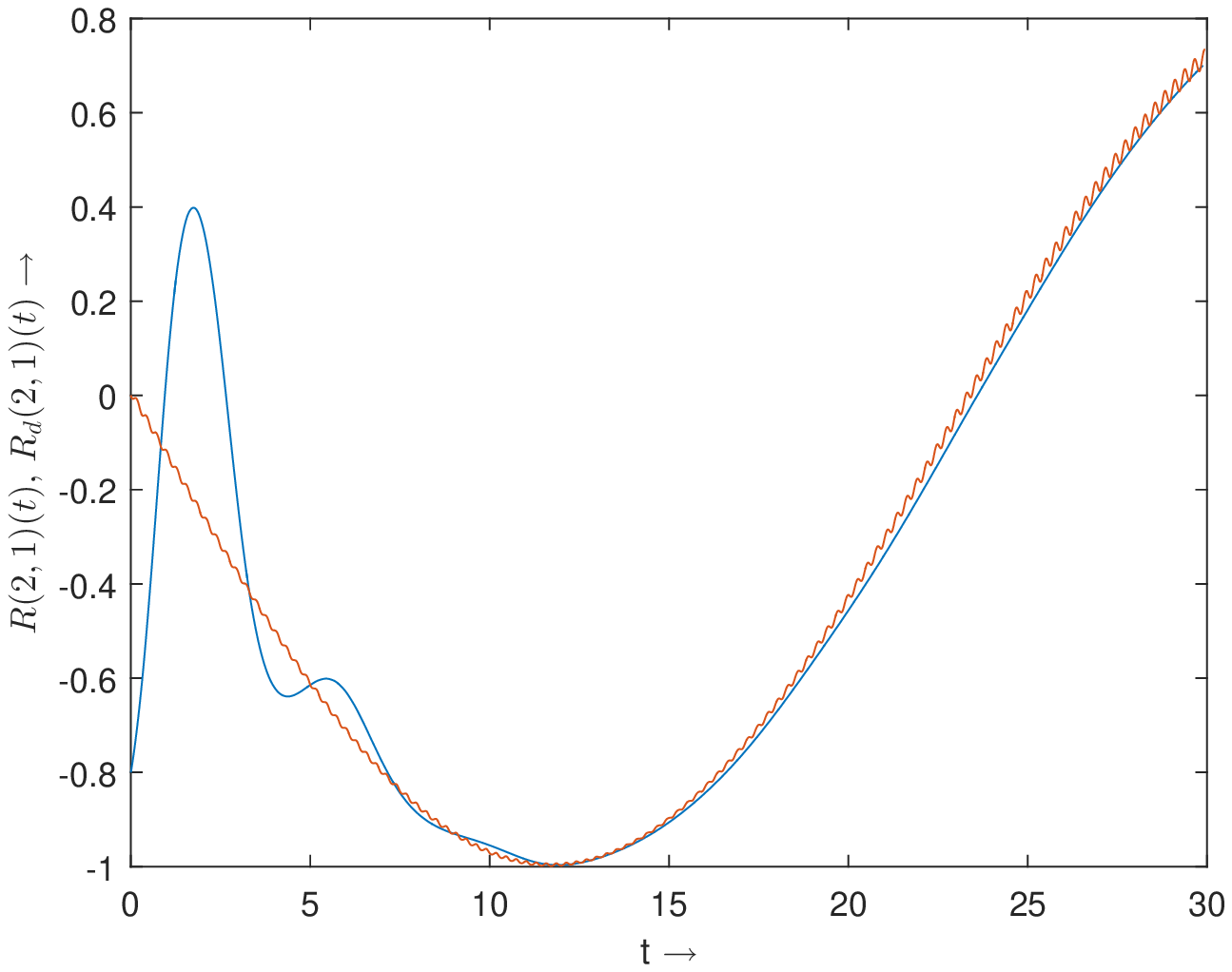}
  \caption{$(2,1)$(t) of both trajectories}
\end{subfigure}
\begin{subfigure}[b]{0.22\textwidth}
  \includegraphics[scale=0.15]{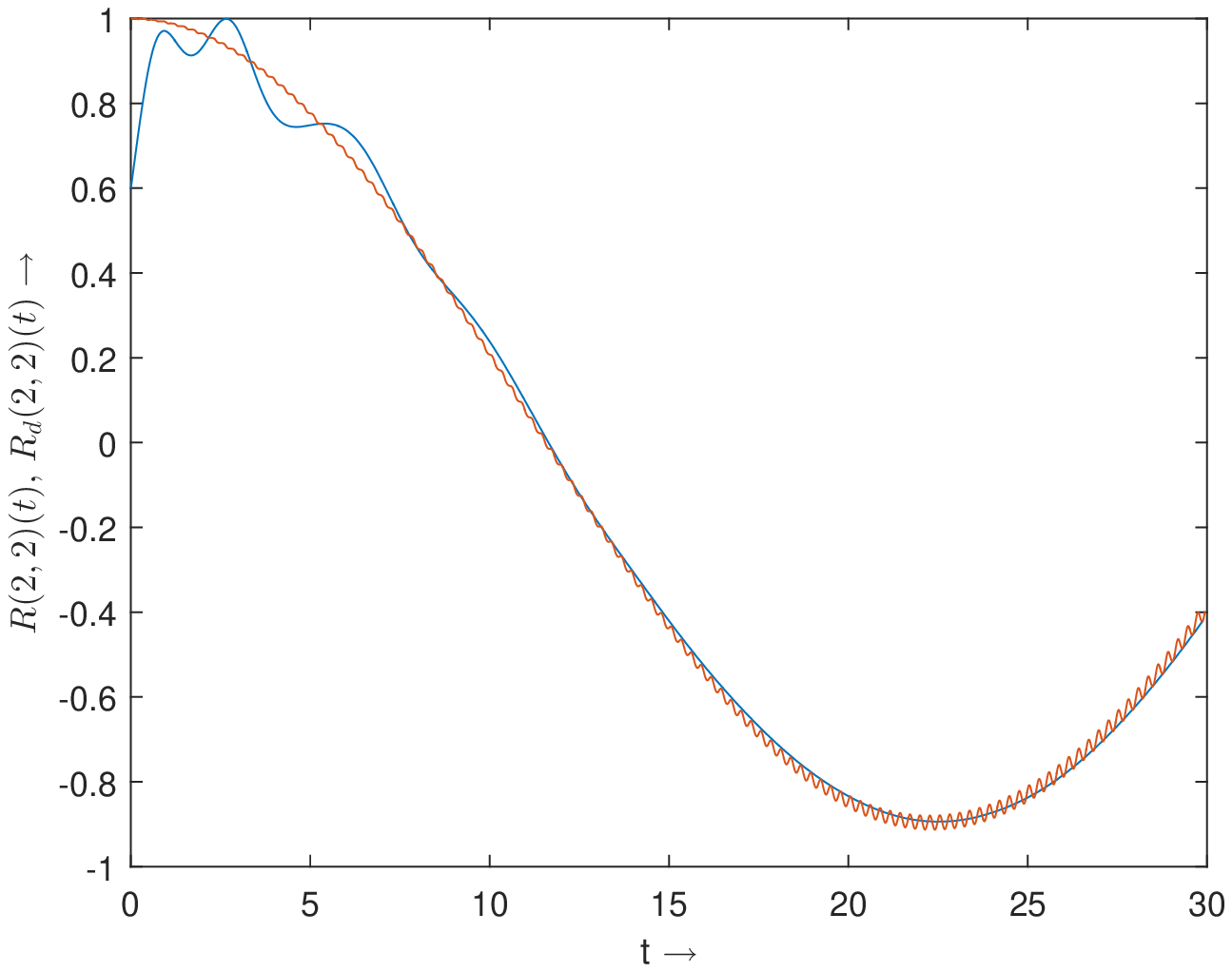}
  \caption{$(2,2)$(t) of both trajectories}
\end{subfigure}
  \caption{Representative plots for $u_{int} = \protect \begin{pmatrix}
                 0 ; 0 ; 0
               \protect \end{pmatrix}^T$ as input torque to reference rigid body} \label{fig13}
\end{figure}

\begin{figure}[ht]
\centering
\begin{subfigure}[b]{0.22\textwidth}
  \includegraphics[scale=0.15]  {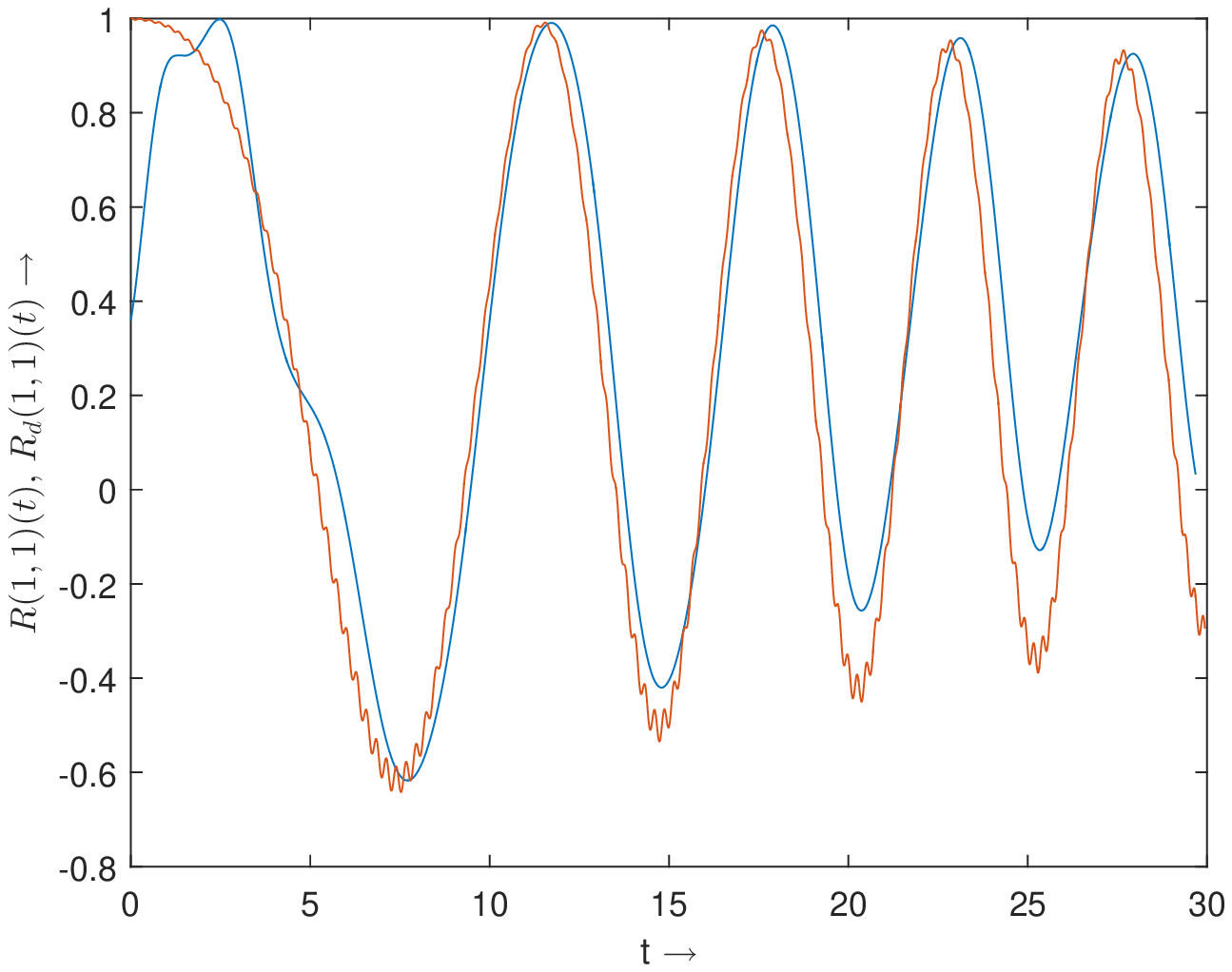}

  \caption{$(1,1)$(t) }
  \end{subfigure}
\begin{subfigure}[b]{0.22\textwidth}
  \includegraphics[scale=0.15]{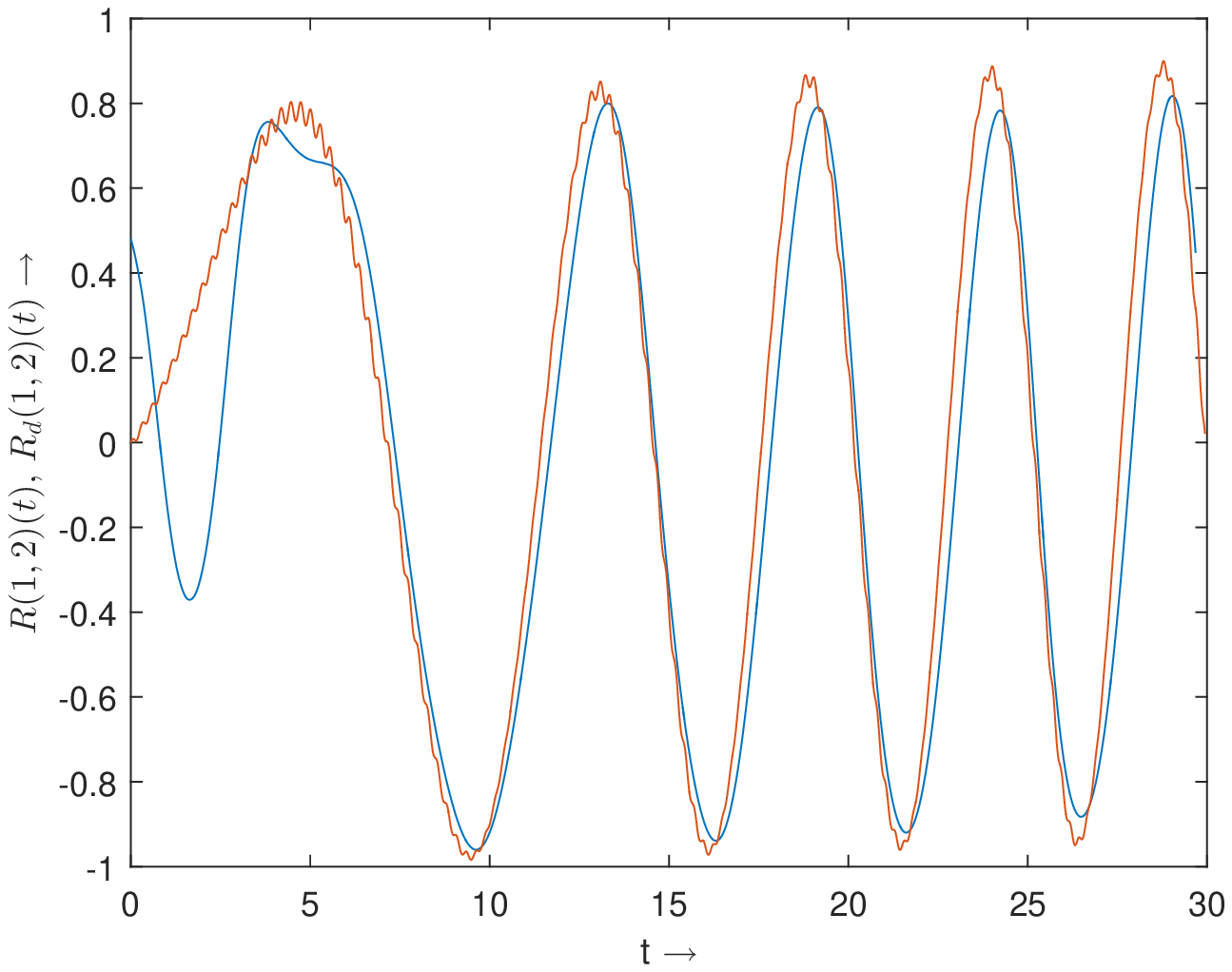}
  \caption{$(1,2)$(t) }
\end{subfigure}
\begin{subfigure}[b]{0.22\textwidth}
  \includegraphics[scale=0.15]{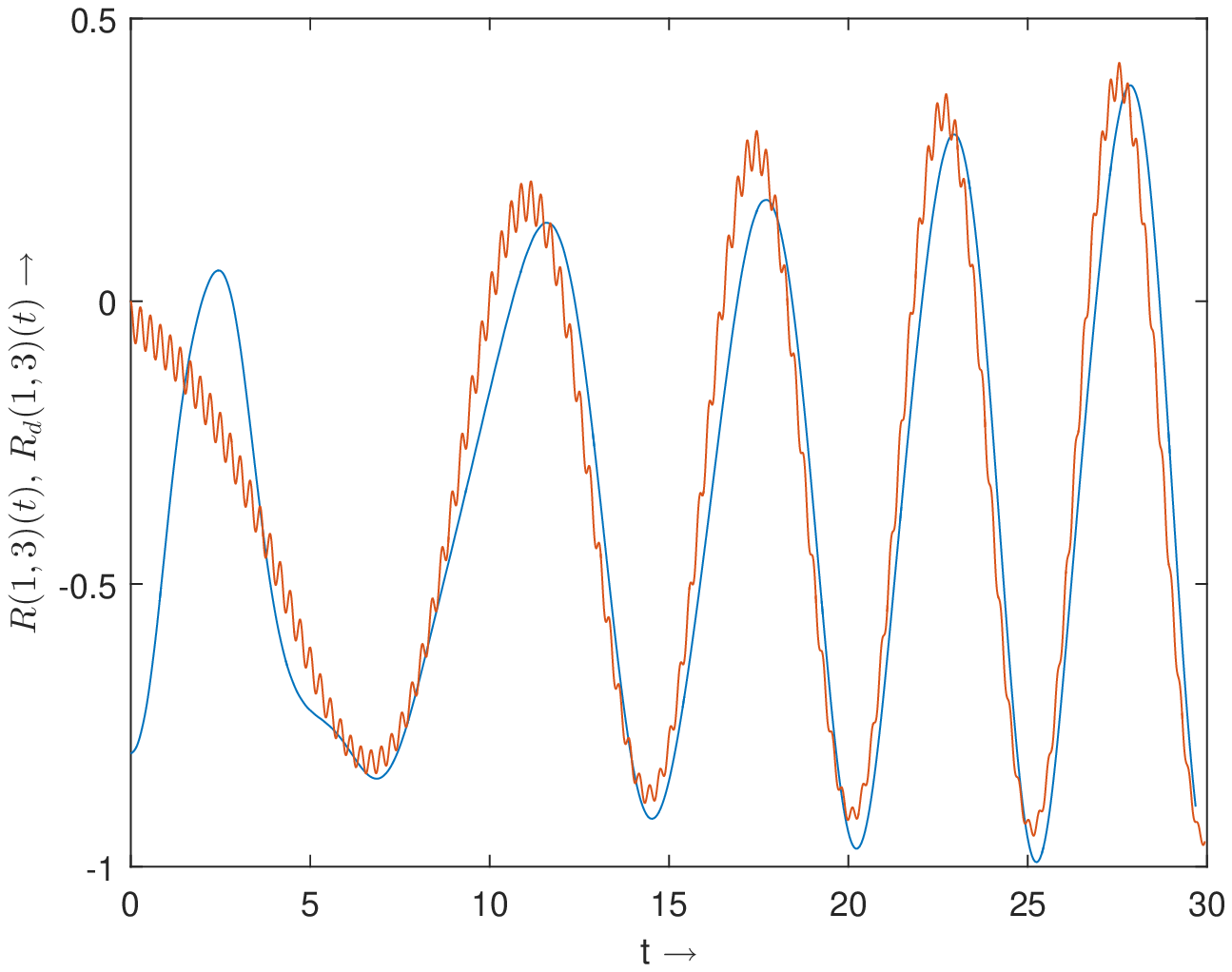}
  \caption{$(1,3)$(t)}
\end{subfigure}
\begin{subfigure}[b]{0.22\textwidth}
  \includegraphics[scale=0.15]  {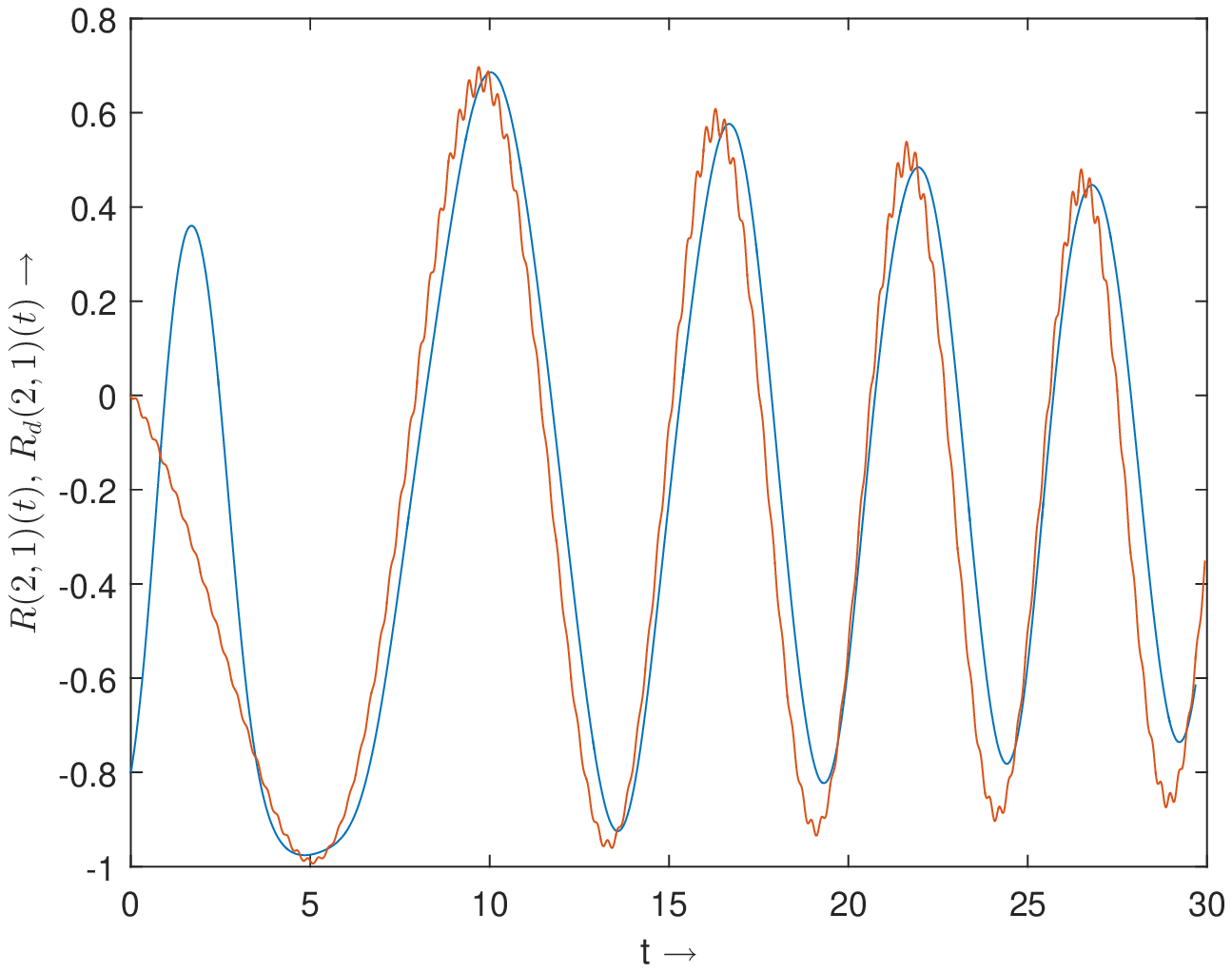}
  \caption{$(2,1)$(t) }\label{fig10}
  \end{subfigure}
 \caption{Representative plots for $u_{int} = \protect \begin{pmatrix}
                 0.2 ; 0.1 ; 0.2
               \protect \end{pmatrix}^T$ as input torque to reference rigid body} \label{fig7}
\end{figure}

\begin{figure}[ht]
\centering
\begin{subfigure}[b]{0.22\textwidth}
  \includegraphics[scale=0.15]  {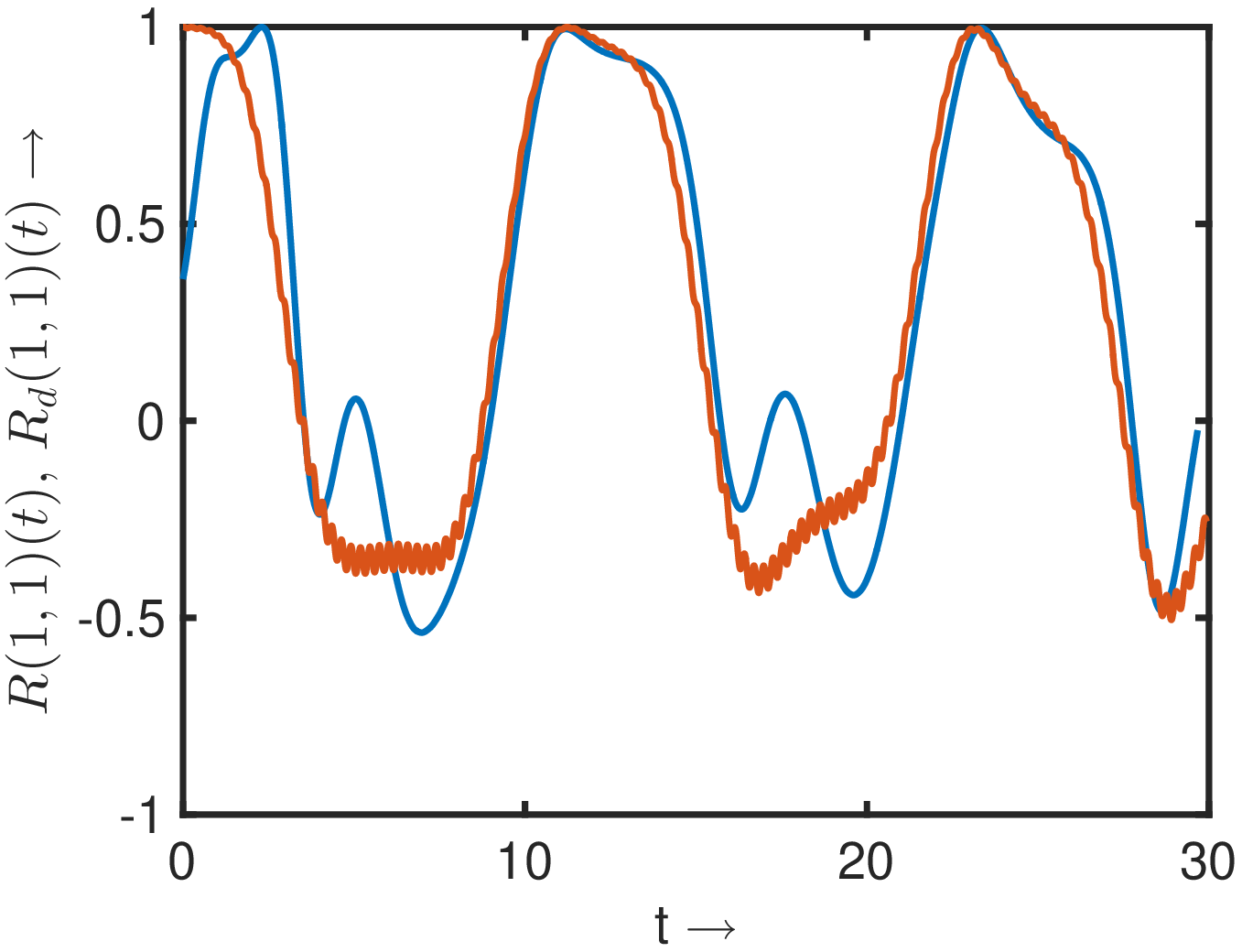}

  \caption{$(1,1)$(t)}
  \end{subfigure}
\begin{subfigure}[b]{0.22\textwidth}
  \includegraphics[scale=0.15]{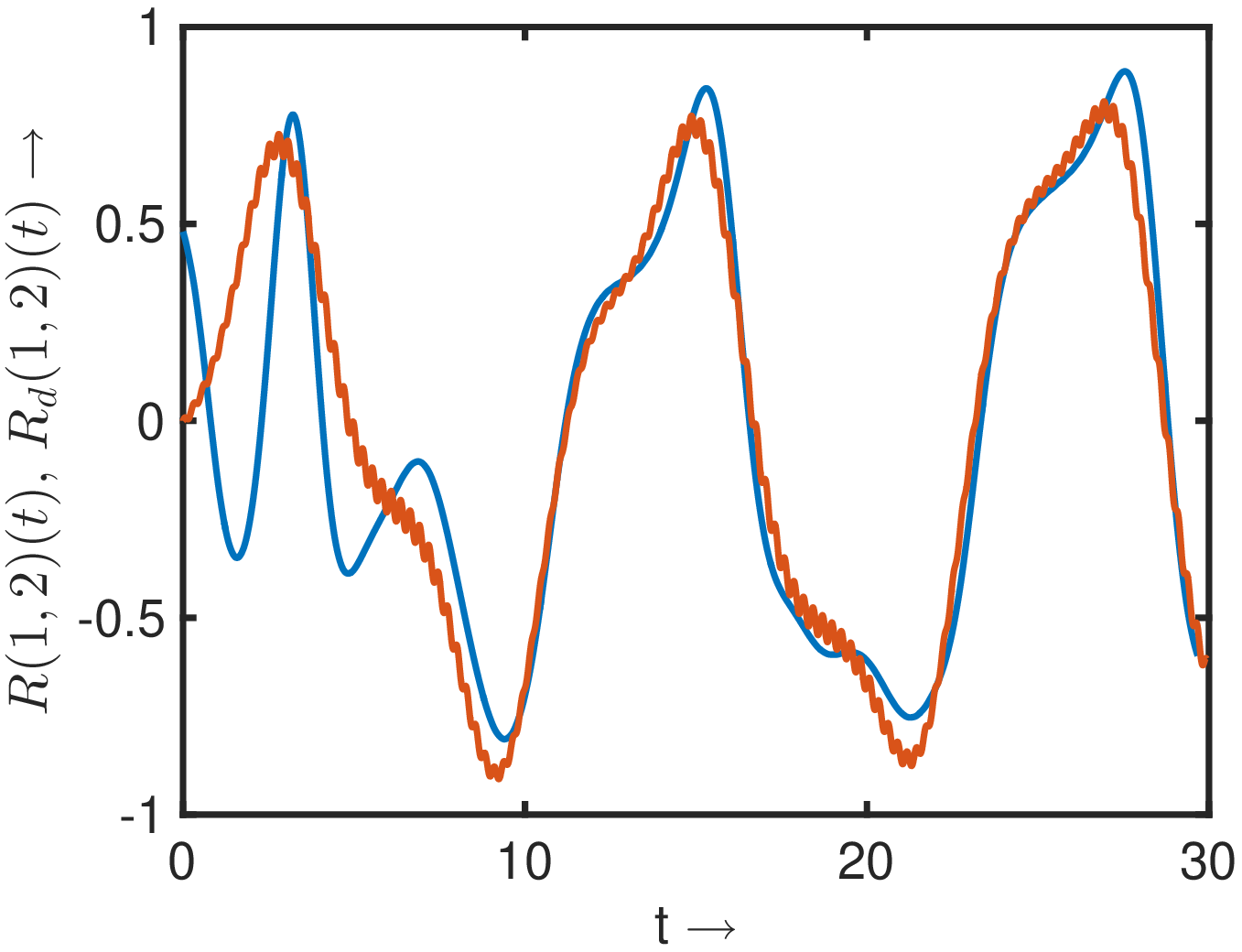}
  \caption{$(1,2)$(t)}
\end{subfigure}
\begin{subfigure}[b]{0.22\textwidth}
  \includegraphics[scale=0.15]{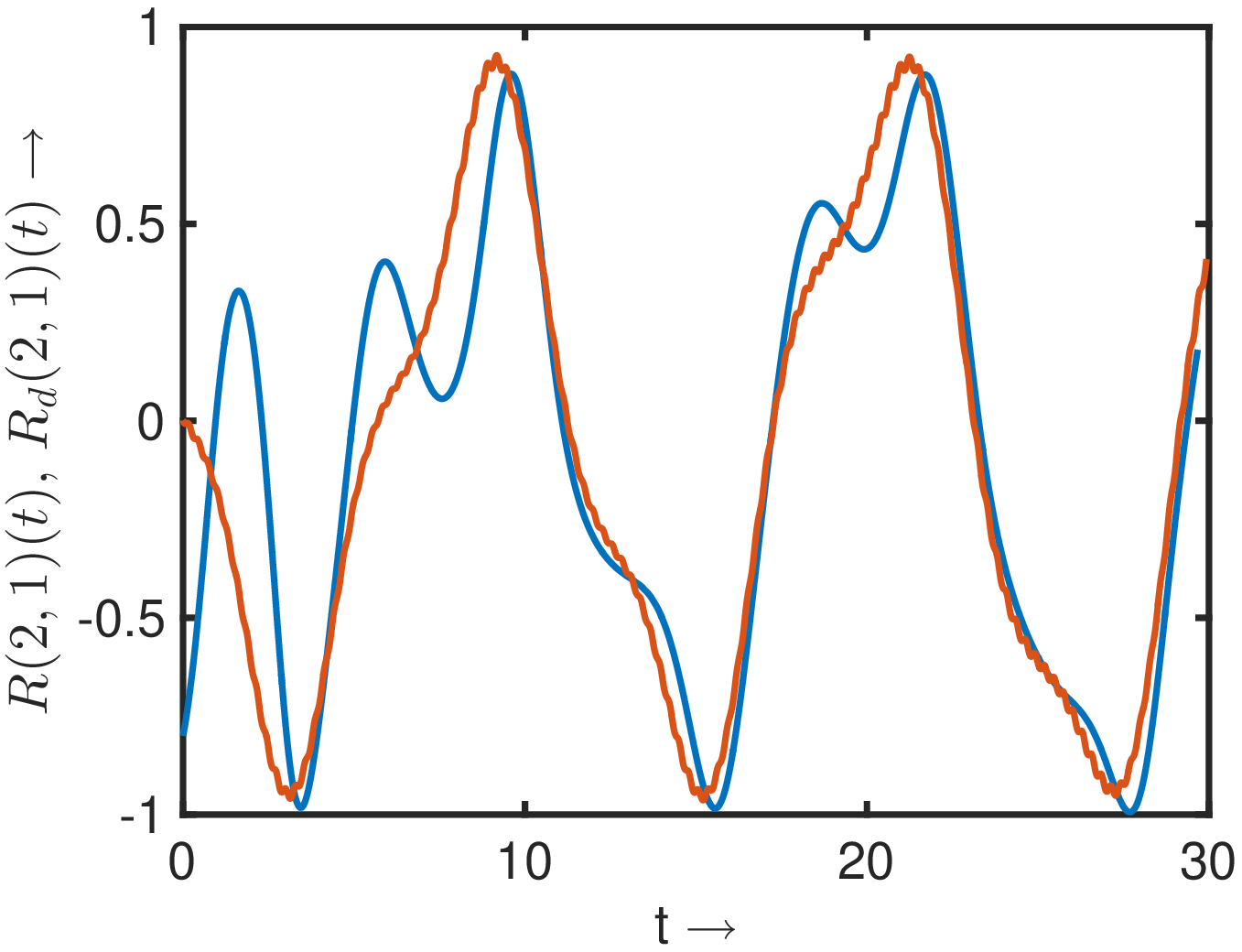}
  \caption{$(2,1)$(t)}
\end{subfigure}
\begin{subfigure}[b]{0.22\textwidth}
  \includegraphics[scale=0.15]{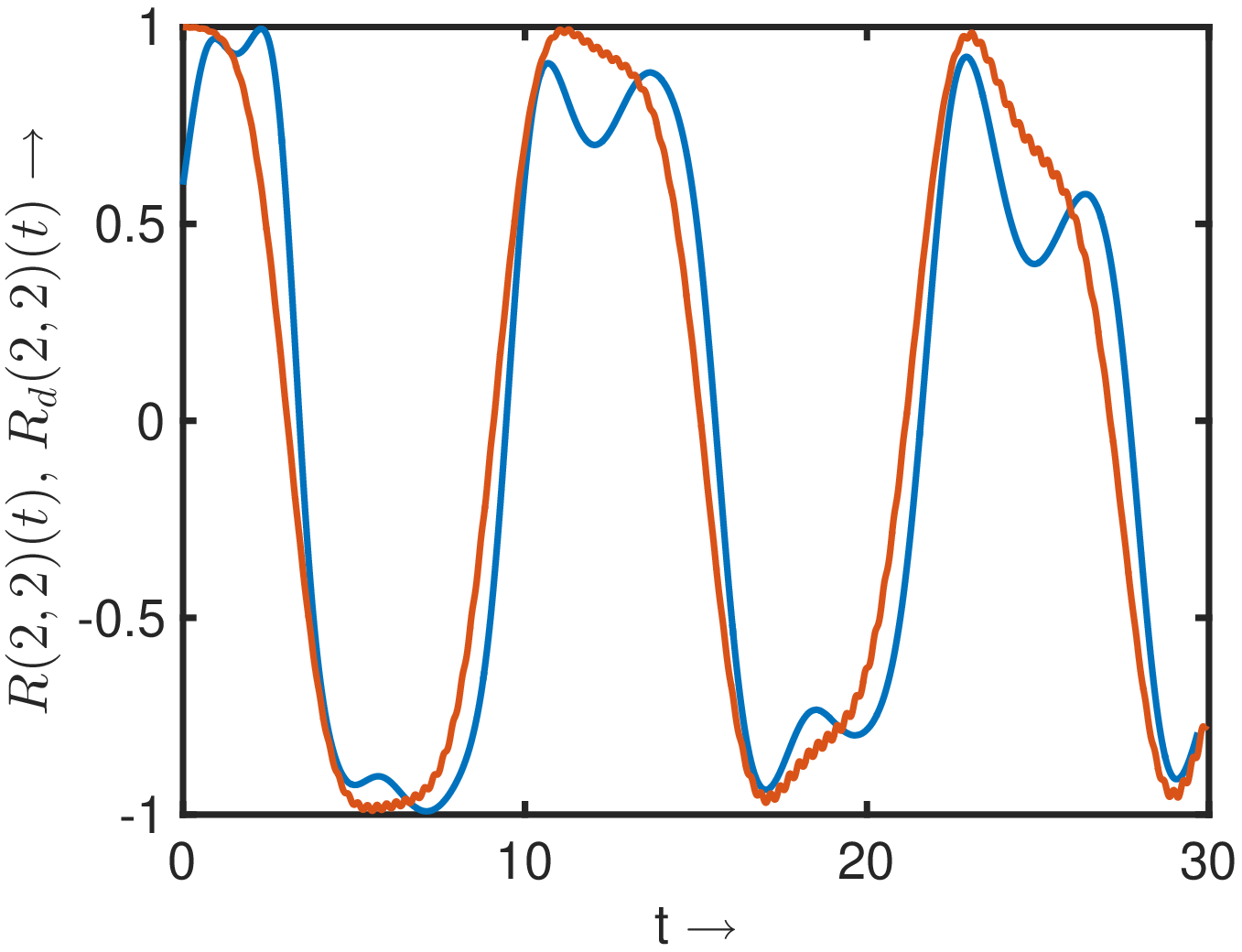}
  \caption{$(2,2)$(t) }
\end{subfigure}
  \caption{Representative plots for $u_{int} = \protect \begin{pmatrix}
                 \sin{t} ; \cos{t} ; \sin{t}
               \protect \end{pmatrix}^T$ as input torque to reference rigid body} \label{fig12}
\end{figure}

\begin{figure}[ht]
 \centering
  \begin{subfigure}[b]{0.15\textwidth}
  \includegraphics[scale=0.075]  {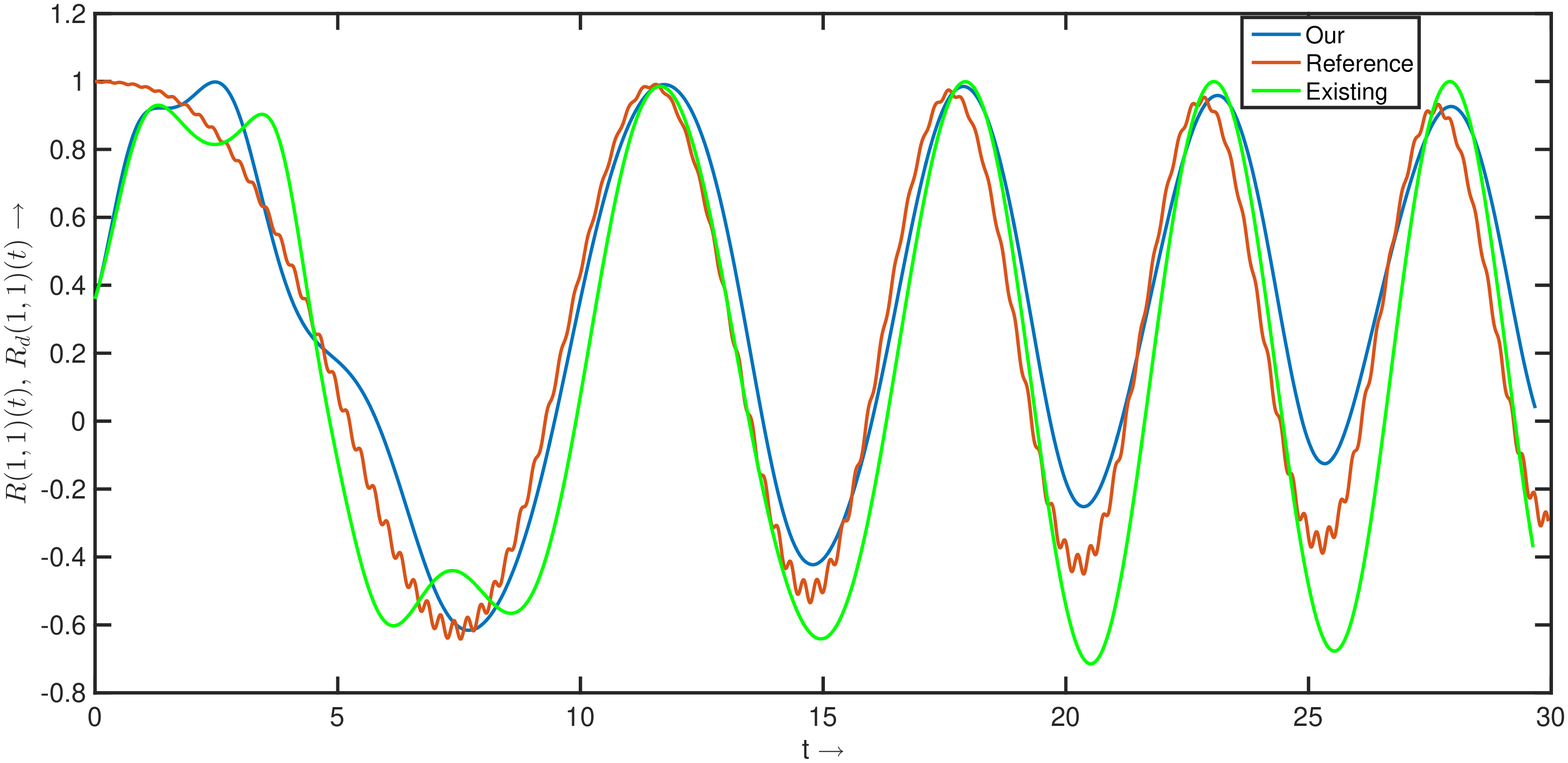}
  \end{subfigure}
\begin{subfigure}[b]{0.15\textwidth}
  \includegraphics[scale=0.075]{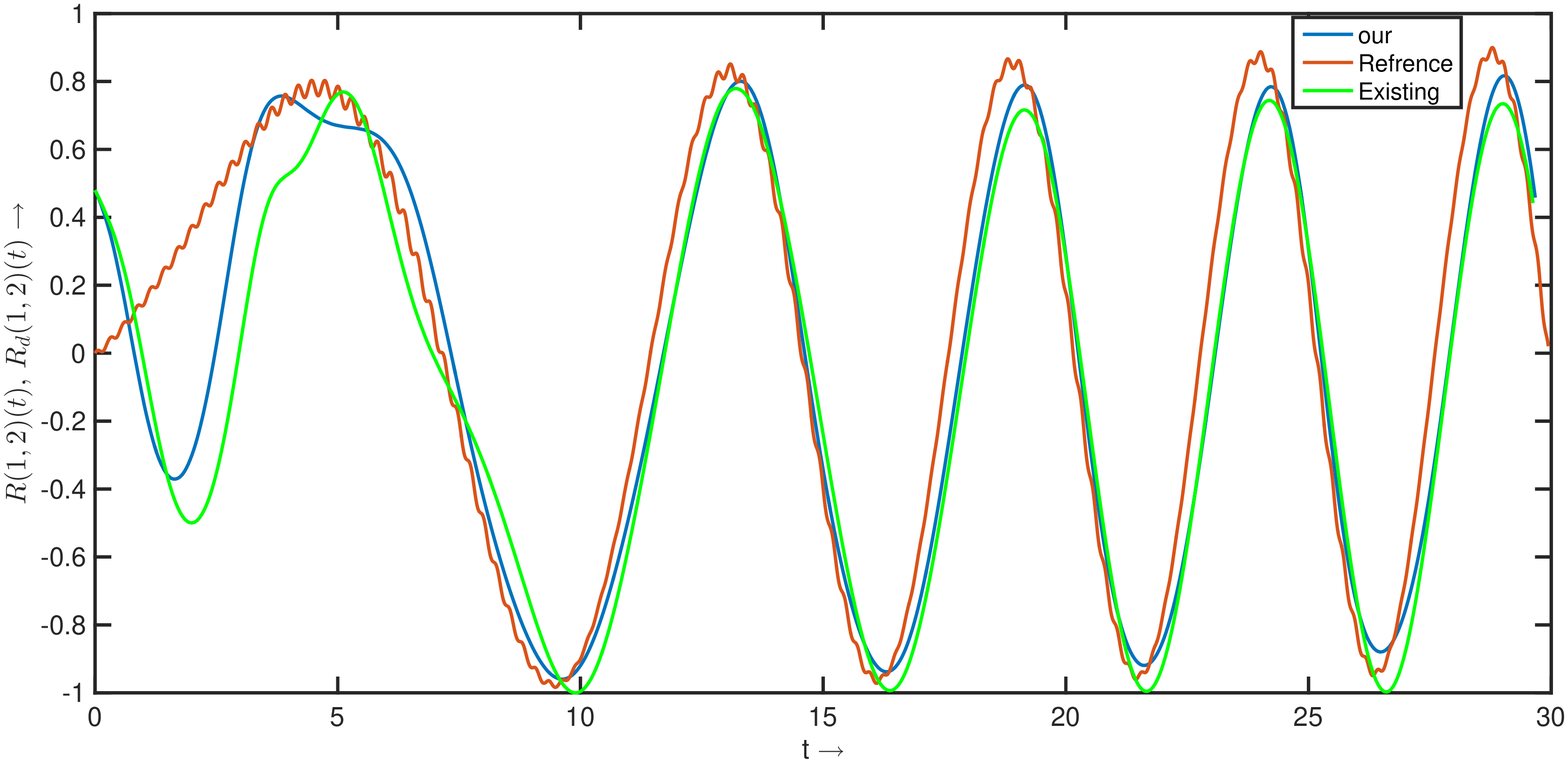}
\end{subfigure}
\begin{subfigure}[b]{0.15\textwidth}
  \includegraphics[scale=0.075]{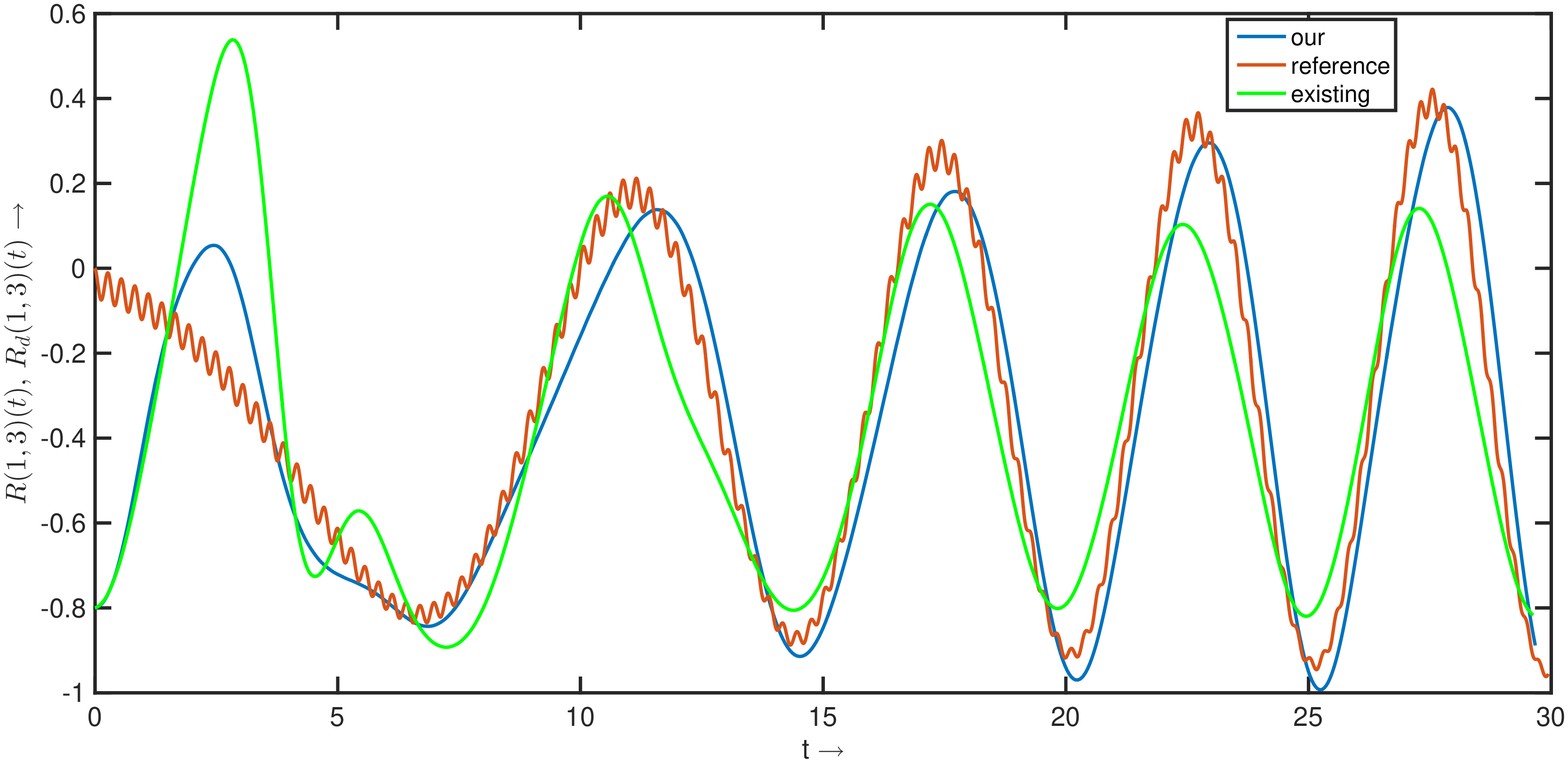}
\end{subfigure}
 \begin{subfigure}[b]{0.15\textwidth}
  \includegraphics[scale=0.075]  {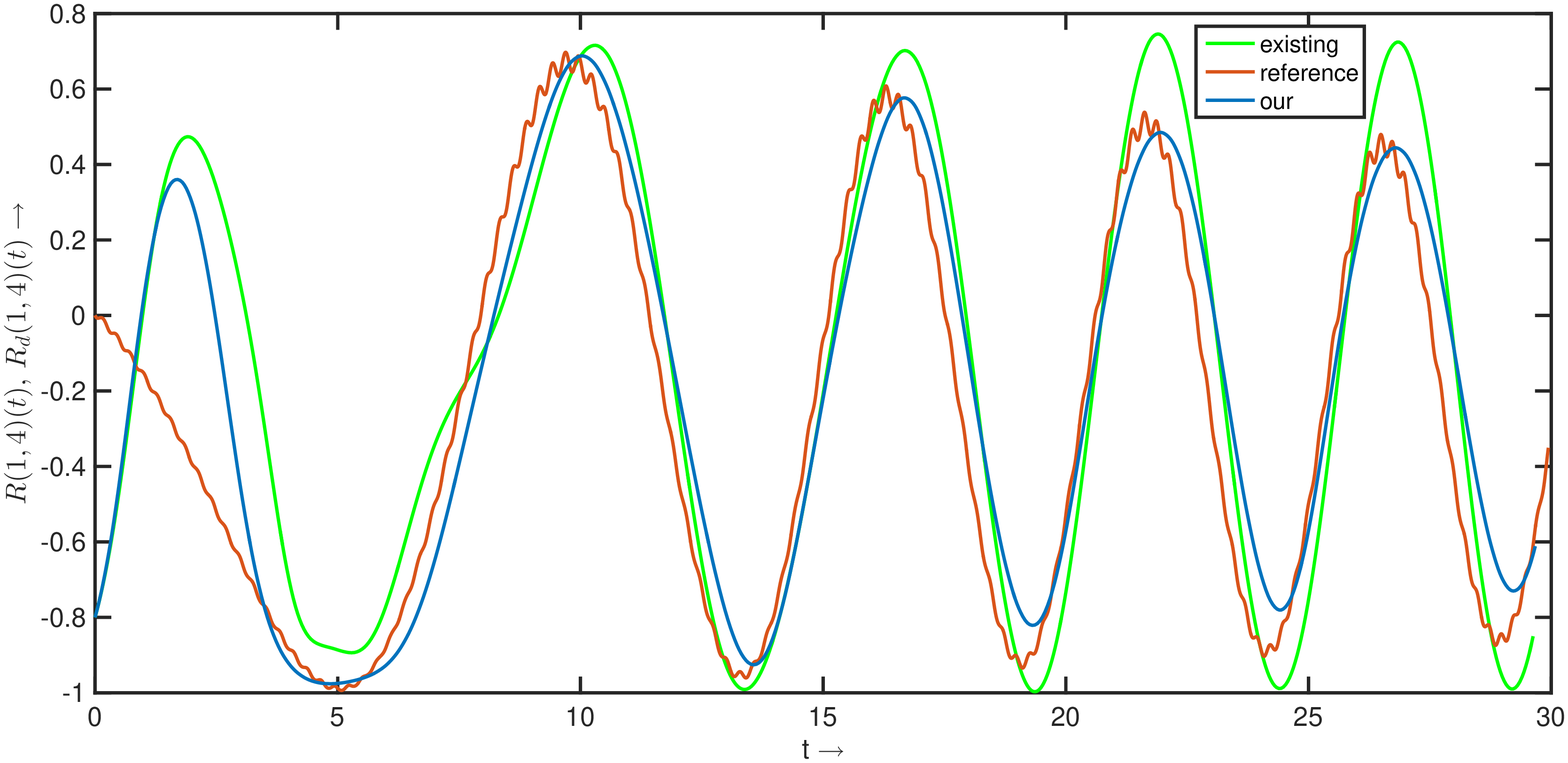}
  \end{subfigure}
\begin{subfigure}[b]{0.15\textwidth}
  \includegraphics[scale=0.075]{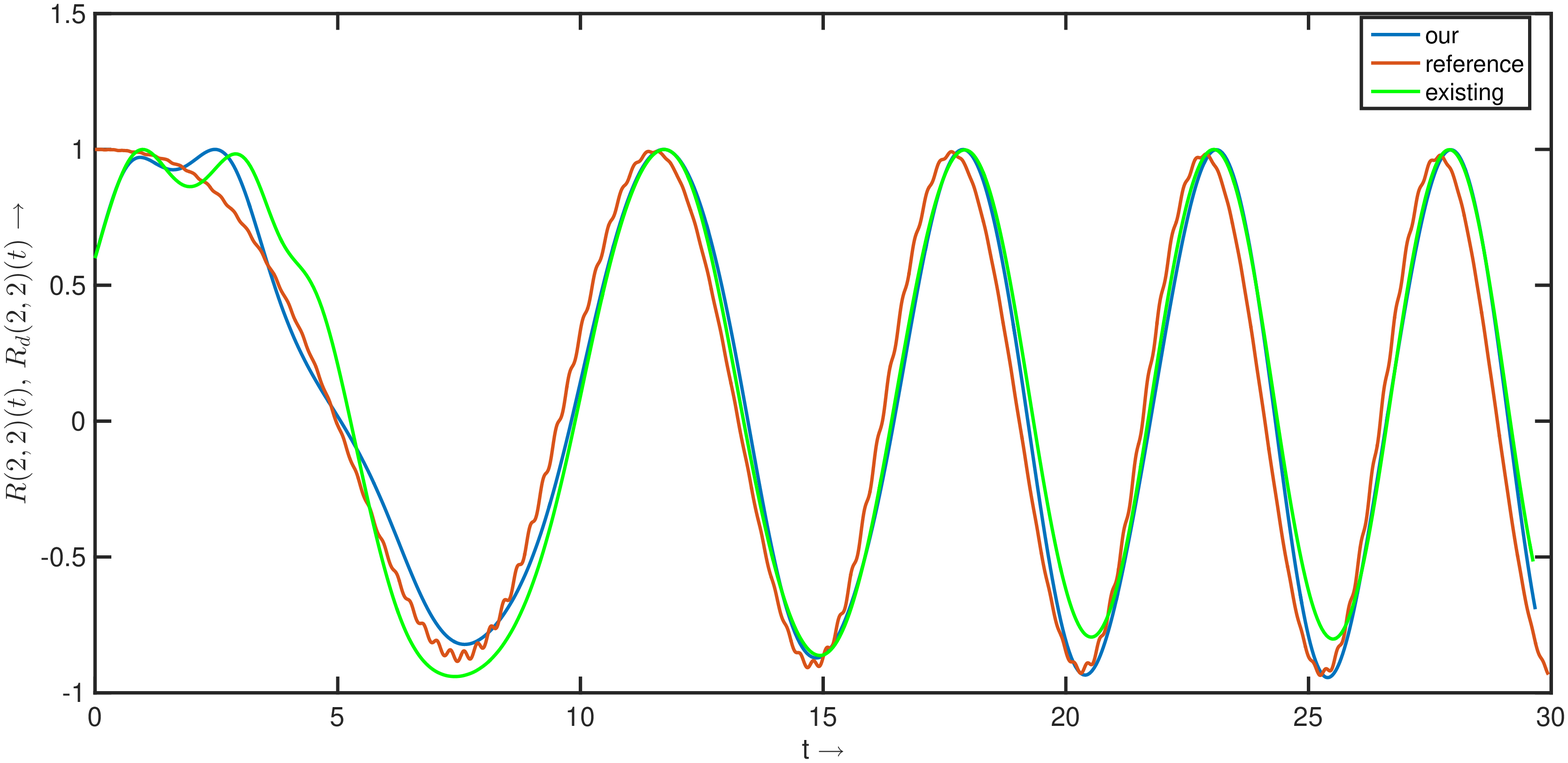}
\end{subfigure}
\begin{subfigure}[b]{0.15\textwidth}
  \includegraphics[scale=0.075]{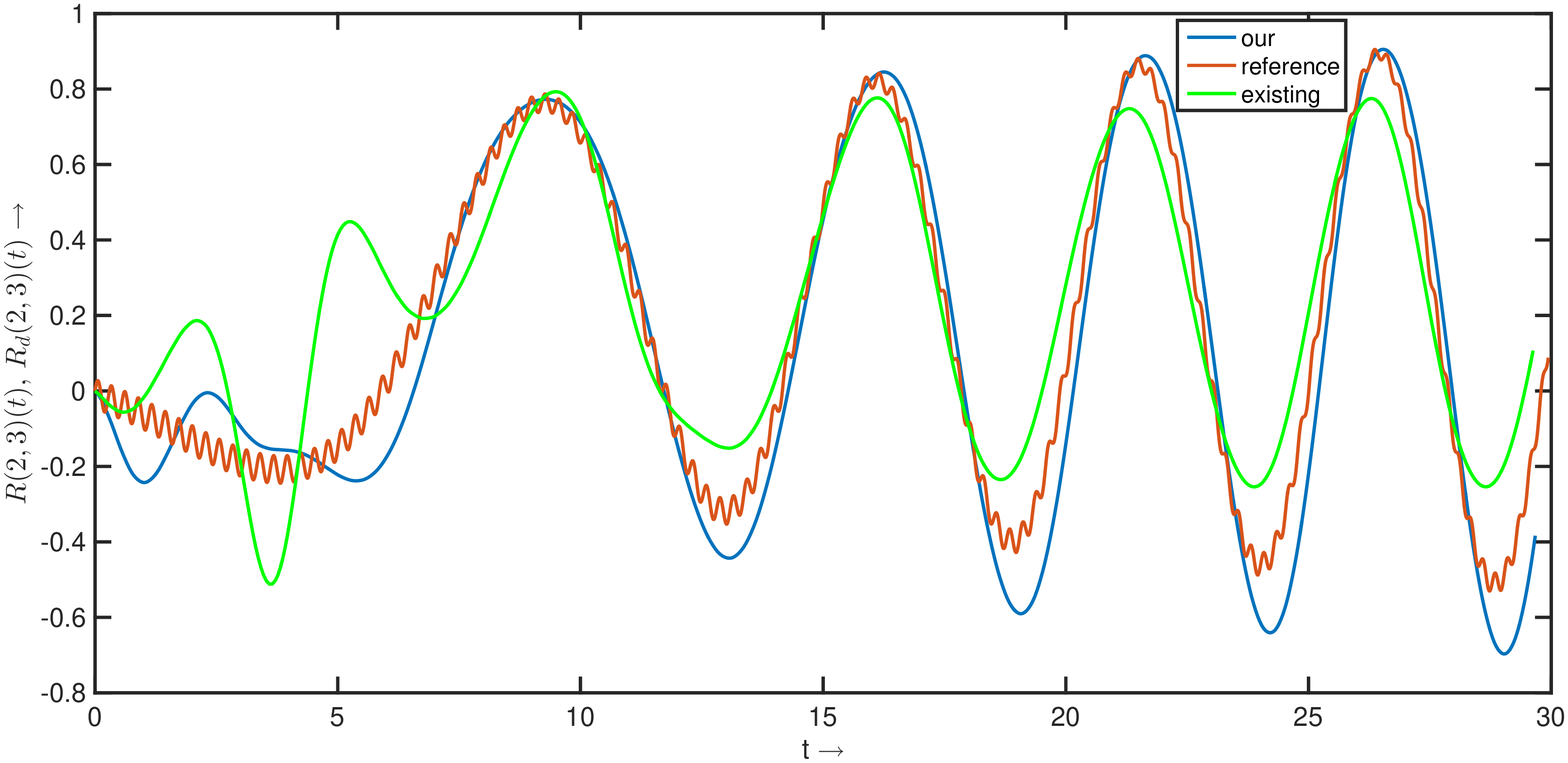}
\end{subfigure}
  \caption{Comparison of tracking results for $u_{int} = \protect \begin{pmatrix}
                 0.2 ; 0.1 ; 0.2
               \protect \end{pmatrix}^T$ as input torque to reference rigid body} \label{fig15}
  \end{figure}

 \begin{figure}[ht]
  \centering
   \begin{subfigure}[b]{0.22\textwidth}
  \includegraphics[scale=0.2]{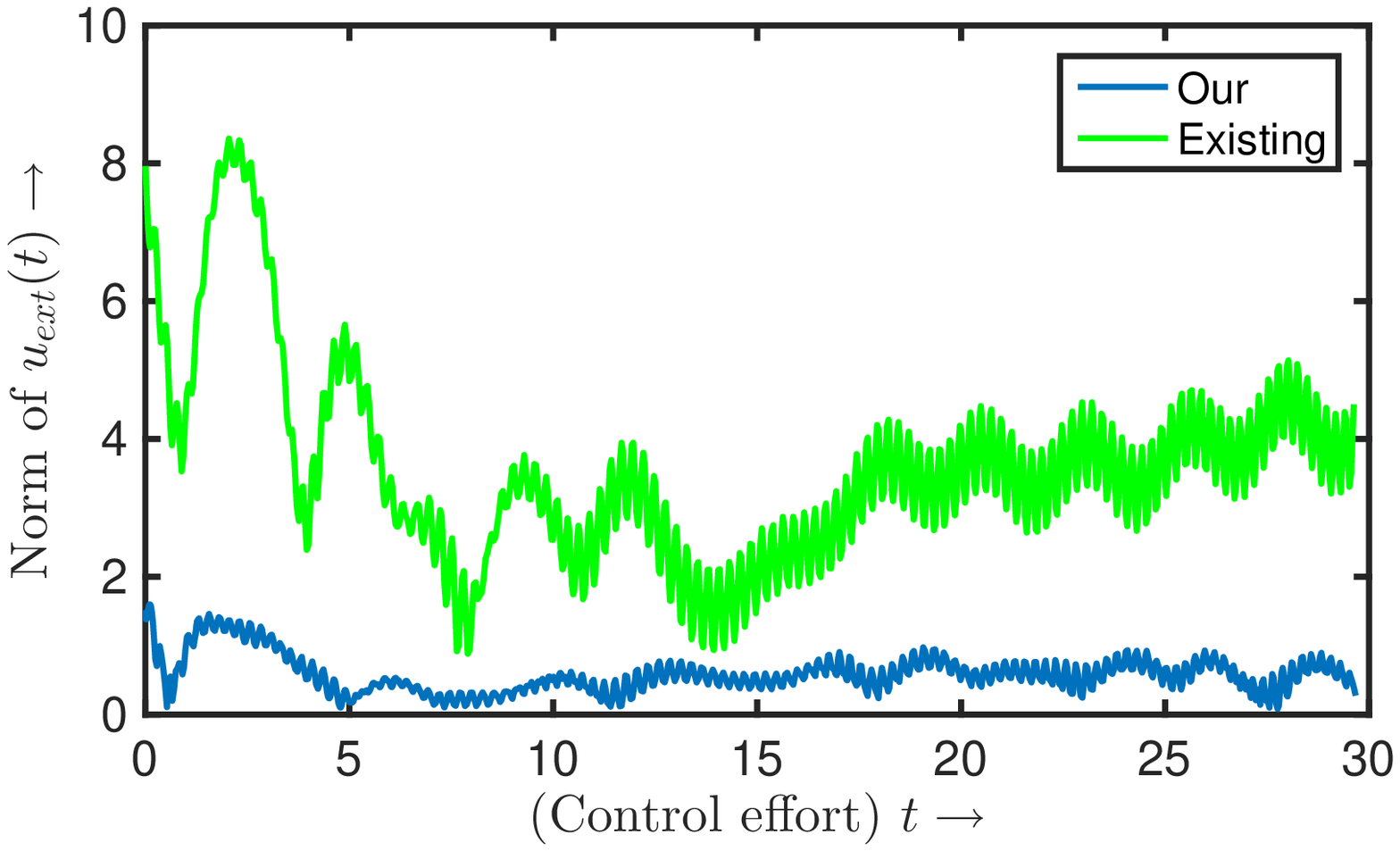}
  \caption{ $u_{ext}$}
\end{subfigure}
\begin{subfigure}[b]{0.22\textwidth}
  \includegraphics[scale=0.2]{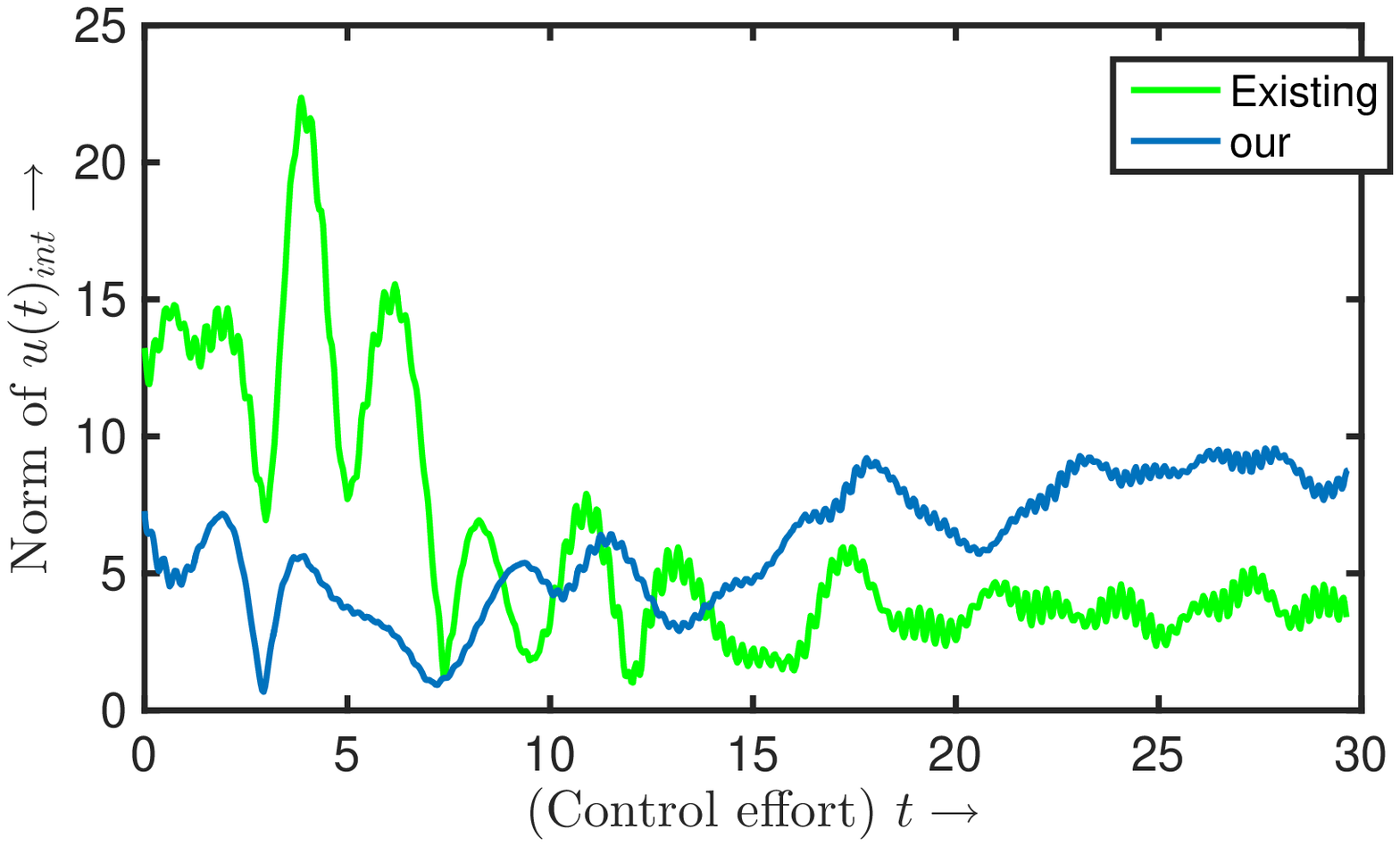}
  \caption{ $u_{int}$}
\end{subfigure}
\caption{Comparison of $2-$ norm of control effort for $u_{int} = \protect \begin{pmatrix}
                 0.2 ; 0.1 ; 0.2
               \protect \end{pmatrix}^T$ as input torque to reference rigid body}\label{fig17}
\end{figure}


\appendices
\section{}\label{appA}
We define $v_e \coloneq \dot{E}$ where $E(g(t), g_r(t))$ is the error trajectory on $G$ defined in \eqref{errmap}. The error dynamics for the SMS in \eqref{dynliegrp} is
\begin{equation}\label{errdyn}
  {\stackrel{\mathbb{G}_{\mathbb{I}}}{\nabla}}_{\dot{E}(g(t), g_{r}(t))}{v_e} =  -k_p\mathbb{G}^\sharp \mathrm{d}\psi (E) - k_d \dot{E} -k_I {\xi_I}
\end{equation}
 where $\xi_I$ is defined in \eqref{xiI}. Therefore, $u_i \in T_E G$. Consider $E_{cl}: TG \times TG \to \mathbb{R}$ given as
\begin{align}\label{Ecl}
E_{cl}(E, \dot{E} ,\mathbb{G}^\sharp \mathrm{d}\psi (E))&= \frac{1}{2} \ll\dot{E},\dot{E} \gg + k_p \psi (E) \\ \nonumber
&+ \frac{\tau}{2}\ll \xi_I, \xi_I\gg \\ \nonumber
&+ \alpha\ll \mathbb{G}^\sharp \mathrm{d}\psi (E),\dot{E} \gg
+ \beta \ll \xi_I, \dot{E} \gg \\ \nonumber
&+ \delta \ll\mathbb{G}^\sharp \mathrm{d}\psi (E), \xi_I \gg
\end{align}
for some constants $k_p$, $k_I$, $k_d$, $\alpha$, $\beta$, $\tau$, $\delta$ that will be determined shortly. The following shows $E_{cl}$ is negative semi-definite. We essentially follow the proof outlined in \cite{pidmtp}.\\
\textit{Remark}: The objective is to show that $E_{cl}$ is non-increasing along trajectories to \eqref{redynrigrot}. The derivative of the first term is given by \eqref{errdyn}. Therefore, appropriately weighted cross terms in \eqref{Ecl} must be introduced so that $E_{cl}$ is negative semi-definite.
\begin{align*}
\frac{\mathrm{d}}{\mathrm{d}t} E_{cl} &= \ll \nabla_{\dot{E}} \dot{E}, \dot{E} \gg +k_p \ll \mathbb{G}^\sharp \mathrm{d}\psi (E), \dot{E} \gg \\ \nonumber
&+ \tau \ll \nabla_{\dot{E}} \xi_I, \xi_I \gg  + \alpha \ll \nabla_{\dot{E}}\mathbb{G}^\sharp \mathrm{d}\psi (E), \dot{E}  \gg \\ \nonumber
&+ \alpha \ll \mathbb{G}^\sharp \mathrm{d}\psi (E), \nabla_{\dot{E}}\dot{E}\gg + \beta \ll \nabla_{\dot{E}} \xi_I, \dot{E}\gg \\ \nonumber
&+ \beta \ll \xi_I, \nabla_{\dot{E}}\dot{E} \gg + \delta \ll \nabla_{\dot{E}}\mathbb{G}^\sharp \mathrm{d}\psi (E), \xi_I\gg
\\ \nonumber
&+ \delta \ll \mathbb{G}^\sharp \mathrm{d}\psi (E), \nabla_{\dot{E}}\xi_I\gg\\ \nonumber
\end{align*}
Grouping together terms and using the fact that $ \nabla_{\dot{E}} \xi_I = \mathbb{G}^\sharp \mathrm{d}\psi (E)$, we have,
\begin{align*}
\frac{\mathrm{d}}{\mathrm{d}t} E_{cl} &= \ll \nabla_{\dot{E}} \dot{E}, \dot{E} + \alpha \mathbb{G}^\sharp \mathrm{d}\psi (E) + \beta \xi_I \gg \\ \nonumber
&+ k_p \ll \mathbb{G}^\sharp \mathrm{d}\psi (E), \dot{E} \gg \\ \nonumber
&+ \ll \nabla_{\dot{E}}\mathbb{G}^\sharp \mathrm{d}\psi (E), \alpha \dot{E} +\delta \xi_I \gg \\ \nonumber
&+  \ll\mathbb{G}^\sharp \mathrm{d}\psi (E), \tau \xi_I + \beta \dot{E} + \delta \mathbb{G}^\sharp \mathrm{d}\psi (E) \gg \\ \nonumber
\end{align*}
From \eqref{errdyn},
\begin{align*}
\frac{\mathrm{d}}{\mathrm{d}t} E_{cl} &= -k_d \ll \dot{E}, \dot{E} \gg +(\beta-k_d\alpha) \ll \dot{E},\mathbb{G}^\sharp \mathrm{d}\psi (E)\gg \\ &-(k_d\beta +k_I )\ll \dot{E}, \xi_I \gg \\ \nonumber
&+ (\delta-\alpha k_p )\ll \mathbb{G}^\sharp \mathrm{d}\psi (E), \mathbb{G}^\sharp \mathrm{d}\psi (E) \gg \\
&+(\tau -k_p \beta +\alpha k_I) \ll \mathbb{G}^\sharp \mathrm{d}\psi (E), \xi_I \gg \\ \nonumber
&- \beta k_I \ll \xi_I , \xi_I \gg  +\ll \nabla_{\dot{E}}\mathbb{G}^\sharp \mathrm{d}\psi (E), \alpha \dot{E} +\delta \xi_I \gg\\ \nonumber
\end{align*}
The $Hess (\psi(q))$ is a $(0,2)$ tensor on $T_q G$ defined as $Hess(\psi (q)) (X,Y) =\ll  \nabla_{X}\mathbb{G}^\sharp \mathrm{d}\psi (q), Y \gg$ for $X$,$Y \in T_q G$, $q \in G$.
As $\psi$ is a navigation function with a unique minimum at $q_0 \in G$, there exists an compact neighborhood $S_{q_0}$ of $(q_0,0,0)$ in $TG \times TG$ in which the $Hess(\psi)$ is postive-definite and $||Hess(\psi)||_2$ is bounded. This implies there is a $\mu >0 $ such that $\ll \nabla _{X}\mathbb{G}^\sharp \mathrm{d}\psi(q),Y \gg < \mu \ll X,Y \gg$ for all $(q,X,Y) \in S_{q_0}$. Therefore,
\begin{align*}
\ll \nabla_{\dot{E}}\mathbb{G}^\sharp \mathrm{d}\psi (E), \alpha \dot{E} +\delta \xi_I \gg \quad &< \quad  \mu \alpha \ll \dot{E}, \dot{E}\gg \\
&+ \mu \delta \ll \dot{E}, \xi_I \gg
\end{align*}
So,
\[ \frac{\mathrm{d}}{\mathrm{d}t} E_{cl} \leq- v Q v^T
\]
where $v = \begin{pmatrix}
             || \dot{E}|| & || \mathbb{G}^\sharp \mathrm{d}\psi (E)|| & ||\xi_I||
           \end{pmatrix}$
and \begin{align*}Q &= \\
 &\begin{pmatrix}
          k_d  -\mu \alpha & \frac{1}{2}( k_d \alpha-\beta) & \frac{1}{2}(k_d \beta + k_I-\mu \delta) \\
         \frac{1}{2}( k_d \alpha-\beta)& \alpha k_p-\delta & \frac{1}{2}( k_p \beta +\alpha k_I-\tau ) \\
          \frac{1}{2}(k_d \beta + k_I-\mu \delta)  & \frac{1}{2}( k_p \beta +\alpha k_I-\tau ) & \beta k_I
        \end{pmatrix}\end{align*}
We shall now express $\alpha$, $\beta$, $\delta$ and $\tau$ in terms of $k_p$, $k_d$ and $k_I$. We set $\alpha = \frac{\beta}{k_d}$ so that $Q_{12}= Q_{21}=0$. Next, we choose $\tau = k_p \beta +\alpha k_I$ makes $Q_{23} = Q_{32}=0$. Let $\beta = \frac{k_I}{k_d} $ and $\delta = 2\kappa k_I$ such that $ \frac{1}{\mu} < \kappa < \frac{2}{\mu}$. This makes $Q_{13}= Q_{31}= \mu \delta - 2k_I = -\sigma k_I$ where $\sigma > 0$. \\
$Q_{11} = k_d- \mu \alpha  = k_d-\mu \frac{k_I}{k^2_d}$,\\
$Q_{22} = \frac{k_I }{k^2_d}(k_p-2 \kappa k_d^2)$ and
$Q_{33} = \frac{k_I^2}{k_d}$. Then,  \begin{equation*}
                                            Q =\begin{pmatrix}
          k_d-\mu \frac{k_I}{k^2_d} &0 & -\sigma k_I\\
         0& \frac{k_I }{k^2_d}(k_p-2 \kappa k_d^2)& 0 \\
          -\sigma k_I  & 0 & \frac{k_I^2}{k_d}
        \end{pmatrix}.
                                          \end{equation*}
It is observed that the leading principal minors of $Q$ are positive if $k_p > 2 \kappa k_d^2$ and $0 < k_I< \frac{k_d^3}{\mu}(1-\sigma^2)$. Let  $ \lambda \coloneq \mathrm{sup}_{S_{q_0}} \frac{\ll \mathbb{G}^\sharp \mathrm{d}\psi(E), \mathbb{G}^\sharp \mathrm{d}\psi(E) \gg} {2 \psi(E) }$. It can be shown (\cite{pidmtp}) that $E_{cl}$ is positive definite for all $(E, \dot{E},\mathbb{G}^\sharp \mathrm{d}\psi(E))  \in S_{q_0}$ with $k_p > \lambda\frac{(\sigma^2 + \tau \alpha^2)}{(\tau - \beta^2)} = \lambda\frac{4 \kappa^2 k_I k_d^6 + k_I^3 + k_I^2 k_d k_p}{k_p k_d^5}$. Therefore, we choose $k_p > max \Bigg  \{ 2 \kappa k_d^2, \frac{\lambda k_I^2}{2 k_d^4} \left (1+ \sqrt{1+ \frac{4 k_d^3(k_I^2 + 4 \kappa^2 k_d^6)}{\lambda k_I^3}} \right ) \Bigg \}$ so that both $Q$ and $E_{cl}$ are positive definite.
\newline
The error dynamics \eqref{errdyn} is a dissipative SMS with a control vector field proportional to gradient of a navigation function. From the result in \cite{kodi}, therefore, $u_i$ achieves AGAS of the error dynamics. From \cite{anrnb2}, AGAT control for \eqref{dynrigext} is given by
\begin{align*} u_{ext}&= - g^{-1}_r \mathbb{G}_{\mathbb{I}}^\sharp (u_i) g + g^{-1} (\stackrel{\mathfrak{g}}{\nabla}_\eta \eta  +  \frac{\mathrm{d}}{\mathrm{d}t}{E^{-1}}\mathrm{d}_2E(\dot{g}_r)) g \\ \nonumber
    &- I^\sharp ad^*_\xi I \xi
\end{align*}

\section*{ACKNOWLEDGMENT}
A part of this work was carried out when the second author was visiting IIT-Gandhinagar. Ravi N Banavar acknowledges with pleasure the support provided by IIT-Gandhinagar.

\bibliographystyle{plain}
\bibliography{aps1}
\end{document}